\newtcolorbox{mybox}[1][]{
  drop shadow southeast,
  enhanced,colback=red!5!white,colframe=black!75!, width=0.36\textwidth,
  #1
}
\newtcolorbox{fancybox}[1][]{
  enhanced,drop fuzzy midday shadow,
  boxrule=1pt,arc=4pt,boxsep=0pt,
  left=.5em,right=.5em,top=1ex,bottom=1ex,
  colback=olive, width=0.46\textwidth,#1
}
\newtcbox{\testbox}[1][red]
  {on line, enhanced, drop shadow southeast,arc = 0pt, outer arc = 0pt,
    colback = #1!5!white, colframe = red!75!black,
    boxsep = 0pt, left = 1pt, boxrule = 0pt}
\newtheorem{theorem}{Theorem}
\newtheorem{lemma}{Lemma}
\newtheorem{definition}{Definition}
\DeclareMathOperator{\Tr}{Tr}
\DeclareMathOperator{\RY}{R_Y}
\DeclareMathOperator{\RZ}{R_Z}
\DeclareMathOperator{\Ber}{Ber}
\DeclareMathOperator{\Z}{Z}
\newcommand{\barL}{\bar{\mathcal{L}}}
\DeclareMathOperator{\X}{X}
\DeclareMathOperator{\Y}{Y}
\DeclareMathOperator{\I}{I}
\begin{document}
\title{Accelerating variational quantum algorithms with multiple quantum processors}
 
\author{Yuxuan Du}
\thanks{Corresponding author, duyuxuan123@gmail.com}
\affiliation{JD Explore Academy}
\author{Yang Qian}
\thanks{This work was done when he was a research intern at JD Explore Academy.} 
\affiliation{School of Computer Science, The University of Sydney}
\affiliation{JD Explore Academy}
\author{Dacheng Tao}
 \affiliation{JD Explore Academy}

\begin{abstract}
Variational quantum algorithms (VQAs) have the potential of utilizing near-term quantum machines to gain certain computational advantages over classical methods. Nevertheless, modern VQAs suffer from cumbersome computational overhead, hampered by the tradition of employing a solitary quantum processor to handle large-volume data. As such, to better exert the superiority of VQAs, it is of great significance to improve their runtime efficiency. Here we devise an efficient distributed optimization scheme, called QUDIO, to address this issue. Specifically, in QUDIO, a classical central server  partitions the learning problem into multiple subproblems and allocate them to multiple local nodes where each of them consists of a quantum processor and a classical optimizer.  During the training procedure, all local nodes proceed parallel optimization and the classical server synchronizes  optimization information among local nodes timely. In doing so, we prove a sublinear convergence rate of QUDIO in terms of the number of global iteration under the ideal scenario, while the system imperfection may incur divergent optimization. Numerical results on standard benchmarks demonstrate that QUDIO can surprisingly achieve a superlinear runtime speedup with respect to the number of local nodes. Our proposal can be readily mixed with other advanced VQAs-based techniques  to narrow the gap between the  state of the art and applications with quantum advantage.
\end{abstract}
\maketitle    

\section{Introduction}
Deep learning techniques have penetrated the world during past decades \cite{goodfellow2016deep}. Prototypical applications comprise using deep neural networks (DNNs) to facilitate online shopping  \cite{zhang2019deep}, to accelerate molecule design  \cite{senior2020improved}, and to enhance language translation  \cite{devlin2019bert}. Notably, the success of deep learning heavily relies on distributed hardware and distributed optimization techniques \cite{Jeffrey2012} in the sense that multiple GPU cards are employed to collaboratively process the same learning task. For instance, to ensure that the training of deep bidirectional transformers (BERT) can be completed within a reasonable time (e.g., $4$ days), in total $64$ GPU cards are used to conduct the distributed optimization \cite{devlin2019bert}.  However, the price to pay is  ten thousands dollars and emitting $1438$ lbs of carbon dioxide \cite{strubell2019energy}.  This considerable resource-consumption signifies that deep learning models will become hard to develop and optimize  when the problem size is continuously enlarged. Therefore, it is highly demanded to seek novel techniques to resolve complicated problems in a fast, economic, and environmentally friendly way. 
 
The oath of quantum computing is to accomplish certain tasks beyond the reach of classical computers \cite{biamonte2017quantum,feynman1982simulating,harrow2017quantum}. During past years, big breakthrough has been achieved towards this goal, e.g., a demonstration of the quantum supremacy experiment in the task of sampling the output of a pseudo-random quantum circuit \cite{arute2019quantum}. Among various quantum computational models, variational quantum algorithms (VQAs) \cite{benedetti2019parameterized,bharti2021noisy,cerezo2020variational2,du2018expressive,endo2021hybrid}, which are formed by parameterized quantum circuits (PQCs) and a classical optimizer as shown in the left panel of Fig.~\ref{fig:sheme}, have attracted great attention from industry and academia. The popularity of VQAs origins from the versatility of PQCs, which guarantees their efficient implementations on the noisy intermediate-scale quantum (NISQ) machines \cite{preskill2018quantum}, as well as theoretical evidence of quantum superiority \cite{abbas2020power,banchi2021generalization,bu2021effects,caro2021encoding,du2021efficient,huang2021information,huang2021power,wang2021towards,wu2021expressivity}. Moreover, prior studies have exhibited the progress of VQAs of accomplishing diverse learning tasks, e.g., machine learning issues such as data classification \cite{Du_2021_grover,havlivcek2019supervised,mitarai2018quantum,schuld2019quantum,wang2020quantum} and image generation \cite{huang2020experimental,rudolph2020generation,zhu2019training}, combinatorial optimization \cite{crooks2018performance,farhi2014quantum,hadfield2019quantum,harrigan2021quantum}, finance \cite{alcazar2020classical,coyle2021quantum,hodson2019portfolio}, quantum information processing \cite{beckey2020variational,cerezo2020variational,larose2019variational}, quantum chemistry and material sciences \cite{bauer2016hybrid,google2020hartree,o2019calculating,peruzzo2014variational}, and particle physics \cite{avkhadiev2020accelerating,kokail2019self}.    
  
Despite the tantalizing achievements, modern VQAs generally are plagued by the expensive or even unaffordable execution time for large-volume data, hampered by the regulation such that only a \textit{single} quantum chip is employed in  optimization. For concreteness, let us recall the machinery of quantum neural networks (QNNs), as a crucial subclass of VQAs, when dealing with classification tasks \cite{beer2020training,du2020learnability,farhi2018classification,havlivcek2019supervised,mitarai2018quantum,schuld2019quantum}. To correctly classify $n$ training examples, the classical optimizer iteratively feeds each example to PQCs and then leverages the discrepancy between the obtained $n$ predictions of QNN and $n$ labels to conduct the gradient-based optimization. In this way, the classical optimizer needs to query the single quantum processor at least $O(n)$ times to complete one iteration. Such overhead prohibits the applicability of QNNs for large $n$ \cite{qian2021dilemma}. With this regard, it is of great importance to enhance the computational efficiency of VQAs, as the necessary condition to pursue quantum advantages.

\begin{figure*}
\centering
\includegraphics[width=0.98\textwidth]{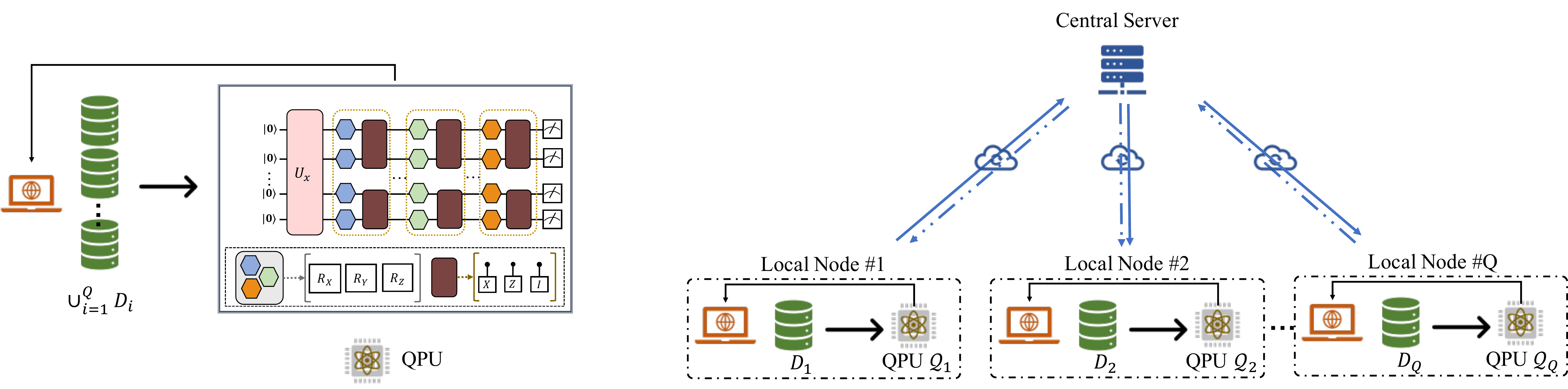}	
\caption{\small{\textbf{The scheme of VQAs and quantum distributed optimization scheme (QUDIO).} The left panel illustrates the workflow of conventional VQAs. The right panel presents the workflow of QUDIO, which consists of multiple local nodes and a central server. Each local node only manipulates a subgroup of the given problem in parallel, while the central server communicates with all local nodes and synchronizes  their results. In this way, QUDIO can accelerate various VQAs.    }}
\label{fig:sheme}
\end{figure*}

In light of the key role of distributed techniques in deep learning and the rapid growth of the amount of available quantum chips, a natural way to accelerate the training of VQAs is involving multiple quantum processors to fulfill the joint optimization. Noticeably,  designing such a scheme is extremely challenged and sharply inconsistent with classical distributed optimization methods \cite{boyd2011distributed,Jeffrey2012} or quantum distributed computation \cite{beals2013efficient}, because of the following three aspects. First, the gradients information operated in VQAs is biased, induced by the system imperfection such as sample error and gate noise, whereas most classical distributed methods assume the unbiased gradients information. Next, unlike distributed methods in DNNs, VQAs are immune to the communication bottleneck, due to the small number of trainable parameters. Last, quantum distributed computation   \cite{beals2013efficient} focuses on using multiple less powerful quantum circuits to simulate a standard quantum circuit, which differs from our aim. Although some quantum software \cite{bergholm2018pennylane,Qiskit} showcases parallelling variational quantum eigen-solvers (VQEs) \cite{peruzzo2014variational}, how to devise general distributed-VQAs optimization schemes with both runtime boost and convergence guarantee remains largely unknown.  

 To conquer the above issues, here we devise an efficient \texttt{QU}antum \texttt{DI}stributed \texttt{O}ptimization scheme (abbreviated as QUDIO). An attractive property of QUDIO is \textit{adequately} utilizing   the accessible quantum resources to accelerate VQAs, owing to its compatibility. Namely, the deployed quantum processors are allowed to be any type of quantum hardware such as linear optical, ion-trap, and superconducting quantum chips. Such a compatibility contributes to apply a wide class of VQAs to manipulate varied large-scale computational problems and seek potential quantum advantages by unifying  quantum powers in a maximum extent. 

Our second contribution is analyzing convergence of QUDIO under both the ideal and NISQ scenarios. Particularly, we prove an asymptotic convergence between QUDIO and conventional VQAs under the ideal setting. By contrast, in the NISQ case, the convergence rate of QUDIO becomes degraded with respect to the amplified system noise and the decreased number of quantum measurements. These results can not only be employed as theoretical guidance to assure good performance of QUDIO, but also motive us to devise more advanced distributed-VQAs schemes. To our best knowledge, this is the first proposal towards distributed-VQAs techniques with theoretical guarantee.  
 
Last, we conduct extensive numerical simulations to validate the computational efficiency of our scheme. In particular, QUDIO is exploited to accomplish image classification and the ground energy of hydrogen molecule estimation tasks under both the ideal and noisy scenarios. The achieved simulation results validate that QUDIO realizes sublinear and even superlinear speedups compared with conventional QNNs and VQEs.

The remainder of this study is organized as follows. In Section \ref{sec:QUDIO-all}, we present the workflow of QUDIO. Subsequently, in Section \ref{sec:QUDIO-QNN}, we exhibit how to use QUDIO to accelerate QNNs with both convergence analysis and numerical simulations. Next, we explain the usage of QUDIO to accelerate VQEs with numerical simulations in Section \ref{sec:QUDIO-VQE}. Last, in Section \ref{sec:conclusion}, we conclude this study and discuss future research directions.  

\section{Quantum distributed optimization scheme}\label{sec:QUDIO-all}

We depict the paradigm of QUDIO in the right panel of Fig.~\ref{fig:sheme} and present the corresponding Pseudocode in Alg.~\ref{alg:Dist-opt}. Conceptually, QUDIO is constituted by a classical central server and $Q$ local nodes $\{\mathcal{Q}_i\}_{i=1}^Q$, where each local node consists of a classical optimizer and a single quantum processor. The algorithmic implementation of QUDIO include three steps.
\begin{enumerate}
	\item At the preprocessing stage, the central server partitions the given problem into $Q$ subproblems and allocates each subproblem to each local node (see Sections \ref{sec:QUDIO-QNN} and \ref{sec:QUDIO-VQE} for concrete explanations). 
	\item The training procedure of QUDIO follows an iterative manner. Set the total number of global and local steps as $T$ and $W$, respectively. At the $t$-th global step with $t\in[T]$, the central server first dispatches the synchronized parameters $\bm{\theta}^{(t)}$ to $Q$ local nodes, which correspond to initial parameters for local updates (Line 5). With a slight abuse of notation, we denote $\bm{\theta}^{(t,w)}_i$ as the trainable parameters for the $i$-th local node at the $w$-th local step. The initial parameters for all local nodes satisfy
\[\bm{\theta}^{(t)}=\bm{\theta}^{(t,w=0)}_i,~ \forall i\in [Q].\] 
After initialization, all local nodes $\{\mathcal{Q}_i\}$ proceed $W$ iterations independently (highlighted by the red region in Lines 7-8). Once all local updates are fulfilled, the central server collects parameters $\{\bm{\theta}^{(t,W)}_i\}_{i=1}^Q$ from all local nodes to execute synchronization of trainable parameters (Line 11), i.e., 
\[\bm{\theta}^{(t+1)}=\frac{1}{Q}\sum_{i=1}^Q \bm{\theta}_i^{(t,W)},\] 
which completes the $t$-th global step. 
\item Through repeating the above procedure with $T$ global steps, the central server outputs the synchronized $\bm{\theta}^{(T)}$ as the optimized parameters. 
\end{enumerate}

\begin{algorithm}[H]
\caption{\small{\textbf{The Pseudocode of  QUDIO}. The codes highlighted by the yellow and pink shadows refer to execute them on the local nodes and the central server, respectively. }}
\label{alg:Dist-opt}
   \begin{algorithmic}[1]
   \State \textbf{Input}: The initialized parameters $\bm{\theta}^{(0)}\in [0,2\pi)^{d_Q}$, the employed loss function $\mathcal{L}$, the given dataset/Hamilton, the hyper-parameters $\{Q,\eta, W,T\}$  \;
   \State The central server partitions the given problem into $Q$ parts and allocates them to $Q$ local nodes \;
 \For{$t=0,\cdots,T-1$} \;
\For{Quantum processor $\mathcal{Q}_i$, $\forall i\in[Q]$ \textbf{in parallel}} \;
  \State $\bm{\theta}_{i}^{(t,0)}= \bm{\theta}^{(t)}$ \;
 \For{$w=0, \cdots, W-1$}
  \State \testbox[red]{Compute the estimate gradients $g_{i}^{(t,w)}$} \;
  \State \testbox[red]{Update  $\bm{\theta}_{i}^{(t, w + 1)}=\bm{\theta}_{i}^{(t,w)} - \eta g_{i}^{(t,w)}$} \;
 \EndFor
\EndFor
  \State \testbox[yellow]{Synchronize $\bm{\theta}^{(t + 1)} =   \frac{1}{Q}\sum_{i=1}^Q \bm{\theta}_{i}^{(t,W)}$}
\EndFor
\State \textbf{Output:} $\bm{\theta}^{(T)}$
\end{algorithmic}
\end{algorithm}

In principle, compared with original VQAs with single quantum processor, this parallel optimization mechanism enables QUDIO to \textit{reduce the computing time by a constant factor equal to the number of local nodes $Q$}. This linear speedup is warranted by the small amount of trainable parameters for most VQAs. Notably, this property differs the algorithmic design between distributed DNNs and distributed VQAs, where the former concerns a considerable communication bottleneck caused by billions of trainable parameters of DNNs. Furthermore, QUDIO is highly compatible and can be seamlessly embedded into cloud computing, since it supports various types of quantum processors to set up local nodes and its central server is purely classical.      

A core component in QUDIO is the approach of decomposing the given problem into $Q$ parts, which in turn results in the varied forms of the estimated gradients $\{g_i^{(t,w)}\}$ in Line 7 of Alg.~\ref{alg:Dist-opt}. For the purpose of elucidating, in the later context, we separately elaborate how to decompose the given problem and calculate the estimated gradients when applying QUDIO to speed up the training of QNNs and VQEs in Section \ref{sec:QUDIO-QNN} and Section \ref{sec:QUDIO-VQE}, respectively.

\textbf{Remark.} Although this study concentrates on QNNs and VQEs, our proposal can be \textit{effectively extended to speed up} other VQAs such as quantum approximate optimization algorithms \cite{amaro2021filtering,farhi2014quantum,hadfield2019quantum,zhang2021neural,zhou2020quantum}.

\section{Accelerate QNN by QUDIO}\label{sec:QUDIO-QNN}
Let us first formalize the classification task discussed in this section. Denote the given dataset as $\mathcal{D}=\{\bm{x}_i, y_i\}_{i=1}^n$, where $\bm{x}_i\in\mathbb{R}^{D_c}$ and $y_i\in\mathbb{R}$ refer to the features and label for the $i$-th example. The generic form of the output of QNNs \cite{du2020learnability,havlivcek2019supervised} is   
\begin{equation}\label{eqn:QNN-ansatz}
	h(\bm{\theta},O,\rho_i) = \Tr(OU(\bm{\theta})\rho_iU(\bm{\theta})^{\dagger}),~\forall i\in[n], 
\end{equation}
where $\rho_i$, $U(\bm{\theta})=\prod_{l=1}^LU_l(\bm{\theta})$, and $O \in \mathbb{C}^{2^N\times 2^N}$ are the encoded $N$-qubit state corresponding to $\bm{x}_i$, the ansatz with the circuit depth $L$ and $d_Q$ parameters (i.e., $\bm{\theta}\in [0,2\pi)^{d_Q}$), and the fixed quantum observable, respectively. Notably, the versatility of QNNs arises from diverse data embedding methods (e.g., preparing $\rho_i$ by basis, amplitude, and qubit encoding methods  \cite{larose2020robust}), flexible architectures of the ansatz $U(\bm{\theta})$ (e.g., hardware-efficient and tensor-network based ansatzes), and the agile choice of $O$. The aim of QNNs is to seek the optimal parameters $\bm{\theta}^*$ that minimize a predefined loss $\mathcal{L}$. Throughout the whole work, we specify $\mathcal{L}$ as the mean square error with $l_2$-norm regularizer, i.e., 
\begin{equation}\label{eqn:loss-QNN}
	\mathcal{L}(\bm{\theta}, \mathcal{D}) =    \frac{1}{2n}\sum_{i=1}^n  (h(\bm{\theta},O,\rho_i) -  y_i)^2 + \lambda\|\bm{\theta}\|^2_2,
\end{equation}
 where $\lambda \geq 0$ refers to the regularizer coefficient. The optimization of $\bm{\theta}$ can be accomplished by either using gradient-free or gradient-based methods. 
   
We now elaborate on how QUDIO in Alg.~\ref{alg:Dist-opt}   accelerates the training of QNN following Eq.~(\ref{eqn:loss-QNN}). Concretely, at the preprocessing stage, the central server splits the dataset $\mathcal{D}$ into $Q$ subgroups $\{\mathcal{D}_i\}_{i=1}^Q$ and assigns them into $Q$ local nodes $\{\mathcal{Q}_i\}_{i=1}^Q$. In the training procedure, QUDIO harnesses  the following iterative strategy to optimize the trainable parameters. At the $t$-th global step, when $Q$ local nodes receive the synchronized parameters $\bm{\theta}^{(t)}$ sent by the central server, they proceed $W$ local updates independently. Let $\bm{\theta}_i^{(t,w=0)}=\bm{\theta}^{(t)}$ for $\forall i\in[Q]$ and $\forall w\in[W]$. The updating rule of $\mathcal{Q}_i$ associated with the stochastic gradient descent optimizer    \cite{goodfellow2016deep} yields 
\begin{equation}\label{eqn:updatedrule-QNN}
	\bm{\theta}_i^{(t,w+1)} = \bm{\theta}_i^{(t,w)} - \eta   g_i(\bm{\theta}_i^{(t,w)}, \bm{x}_i^{(t,w)})\in [0,2\pi)^{d_Q},
\end{equation}       
where $\eta$ is the learning rate, the example $(\bm{x}_i^{(t,w)}, y^{(t)}_i)$ is uniformly sampled from $\mathcal{D}_i$, and $g_i(\bm{\theta}_i^{(t,w)}, \bm{x}_i^{(t,w)})$ denotes the estimation of $\nabla \mathcal{L}(\bm{\theta}_i^{(t,w)}, \bm{x}_i^{(t,w)})$  induced by the system noise and sample error. Once all local updates are completed, the central server receives  parameters $\{\bm{\theta}_i^{(t,W)}\}_{i=1}^Q$ and synchronizes them to update the global trainable parameters, i.e.,
\begin{equation}\label{eqn:QUDIO-glb-upd}
	\bm{\theta}^{(t+1)} = \frac{1}{Q} \sum_{i=1}^Q \bm{\theta}^{(t,W)}_i.
\end{equation}   
Through repeating the above process with $T$ times, QUDIO outputs $\bm{\theta}^{(T)}$ as the trained parameters.   

In the remainder of this section, we first explain the acquisition of the estimated gradients $g_{i}^{(t,w)}$ and analyze the convergence of QUDIO. We then benchmark  performance of QUDIO towards image classification tasks. 

\subsection{The acquisition of the estimated gradients }
The explicit form of $ g_i(\bm{\theta}_i^{(t,w)}, \bm{x}_i^{(t,w)})$ in Eq.~(\ref{eqn:updatedrule-QNN}) is established on the analytic gradient $\nabla \mathcal{L}_i(\bm{\theta}_i^{(t,w)}, \bm{x}_i^{(t,w)})$. In this perspective, here we first recap the mathematical expression of $\nabla \mathcal{L}_i(\bm{\theta}_i^{(t,w)}, \bm{x}_i^{(t,w)})$. According to Eq.~(\ref{eqn:updatedrule-QNN}), the updating rule of $\mathcal{Q}_i$ in the \textit{ideal scenario} yields 
\begin{equation}\label{eq:param_update}
	\bm{\theta}_i^{(t,w+1)} = \bm{\theta}_i^{(t,w)} - \eta \nabla \mathcal{L}_i(\bm{\theta}_i^{(t,w)}, \bm{x}_i^{(t,w)}),
\end{equation}  
where $\mathcal{L}_i(\bm{\theta}^{(t,w)}_i, \bm{x}_i^{(t,w)}) =  \frac{1}{2} (\hat{y}_i^{(t,w)}-  y_i^{(t)})^2 + \lambda\|\bm{\theta}^{(t,w)}_i\|^2_2$ and $\hat{y}_i^{(t,w)} = h(\bm{\theta}^{(t,w)}_i,O,\rho_i^{(t)}) $ refers to the prediction of QNN for the sampled example $\bm{x}_i^{(t,w)}$ as defined in  Eq.~(\ref{eqn:QNN-ansatz}). The evaluation of the analytic gradient can be achieved via the parameter shift rule \cite{mitarai2018quantum,schuld2019evaluating}, i.e., the $j$-th component of $\nabla_j \mathcal{L}_i(\bm{\theta}_i^{(t,w)}, \bm{x}_i^{(t,w)})$ for $\forall j\in[d_Q]$ satisfies \begin{equation}\label{eqn:analy-grad-QNN}
   (\hat{y}_i^{(t,w)} - y_i^{(t)}) \frac{\hat{y}_i^{(t,w,+_j)}- \hat{y}_i^{(t,w,-_j)}}{2}+\lambda \bm{\theta}_{i,j}^{(t,w)},
\end{equation} 
where $\hat{y}_i^{(t,w,\pm_j)}=h(\bm{\theta}^{(t,w,\pm)}_i,O,\rho_i^{(t)})$ denotes the outputs of QNN with shifted parameters $\bm{\theta}^{(t,w,\pm)}_i=\bm{\theta}^{(t,w)}_i\pm \frac{\pi}{2} \bm{e}_j$.   

In the NISQ scenario, the system noise and the sample error forbid the acquisition of the analytic gradients. Instead, the classical optimizer can only collect the estimated gradients. More precisely, suppose that the depolarization noise channel $\mathcal{N}_p(\cdot)$ is injected to each quantum circuit depth, i.e.,
\begin{equation}\label{eqn:dep_noise}
	\mathcal{N}_p(\rho)=(1-p)\rho+p\frac{\mathbb{I}}{2^N}.
\end{equation}
The output state before measurements is 
\[\gamma_i^{(t,w)}=(1-\tilde{p})U(\bm{\theta}^{(t,w)})\rho_i^{(t)}U(\bm{\theta}^{(t,w)})^{\dagger}+\tilde{p}\frac{\mathbb{I}}{2^N},\] 
where $\tilde{p}=1-(1-p)^{L_Q}$ and $L_Q$ refers to the total circuit depth \cite{du2020learnability}.     Suppose $\{O, \mathbb{I}- O\}$ in Eq.~(\ref{eqn:QNN-ansatz}) refers to a two-outcome positive operator valued measure (POVM) \cite{nielsen2010quantum}. Then a quantum measurement on the state $\gamma_i^{(t,w)}$ produces the outcome that can be viewed as a binary random variable with the Bernoulli distribution, i.e., $V_k^{(t,w)}\sim\Ber(\Tr(O\gamma_i^{(t,w)}))$ . In this way, the sample mean $\bar{y}_i^{(t,w)}=\sum_{k=1}^K V_k^{(t,w)}/K$ corresponding to $\Tr(O\gamma_i^{(t,w)})$ is obtained after $K$ measurements. In conjunction with above observations and the analytic gradients in Eq.~(\ref{eqn:analy-grad-QNN}), the explicit form of the estimated gradients for the $j$-th component $g_{i,j}(\bm{\theta}_i^{(t,w)}, \bm{x}_i^{(t,w)})$ with $\forall j\in[d_Q]$ satisfies    
\begin{equation}\label{eqn:est-grad-QNN}
 (\bar{y}_i^{(t,w)} - y_i^{(t)}) \frac{\bar{y}_i^{(t,w,+_j)}- \bar{y}_i^{(t,w,-_j)}}{2}+\lambda \bm{\theta}_{i,j}^{(t,w)},
\end{equation} 
where $\bar{y}_i^{(t,w,\pm_j)}=\frac{1}{K}\sum_{k=1}^K V_k^{(\pm_j)}$ are  estimated outputs with shifted parameters, i.e., $V_k^{(\pm_j)}\sim \Ber(\Tr(O\gamma_i^{(\pm_j)}))$ and  $\gamma_i^{(\pm_j)}= (1-\tilde{p})U(\bm{\theta}^{(t,w,\pm_j)})\rho_iU(\bm{\theta}^{(t,w, \pm_j)})^{\dagger}+\tilde{p}\frac{\mathbb{I}}{2^N}$.

To better understand the capability of QUDIO, we further analyze its  convergence rate. The convergence is quantified by the utility  $R_1 = \frac{1}{T}\sum_{t=1}^T \mathbb{E}[\|\nabla \mathcal{L}(\bm{\theta}^{(t)})\|^2]$, where the expectation is taken over the randomness of data sample, the imperfection of quantum system, and the finite quantum measurements. Intuitively, the metric $R_1$ evaluates how far QNN is away to the stationary points, which is a standard measure in non-convex optimization theory \cite{jain2017non,sun2019optimization}. The following theorem summarizes the convergence rate of QUDIO, whose proof is provided in Appendix \ref{append:QUDIO-QNN-conv}.  
\begin{theorem}\label{thm:conv-qnn}
Assume that the discrepancy between the collected local gradients and the analytic gradients is bounded, i.e., $\forall i\in[Q]$ and  $\forall \bm{\theta}\in[0,2\pi)^{d_Q}$, there exists
	\begin{equation}\label{assum:bound_vari}
		\mathbb{E}_{\bm{x}_i^{(t,w)} \sim \mathcal{D}_i}\left[ \left\|    g_i^{(t,w)} - \nabla \barL_i \left(\bm{\theta}_i^{(t,w)},  \bm{x}_i^{(t,w)}  \right) \right\|^2\right] \leq \sigma, .
	\end{equation}
Following notations in Eqs.~(\ref{eqn:QNN-ansatz})-(\ref{eqn:est-grad-QNN}),  when the system noise is modeled by the depolarization channel in Eq.~(\ref{eqn:dep_noise}) and the number of measurements to estimate the expectation value is $K$, under a mild assumption,  the convergence of QUDIO yields
	\begin{equation}
		R_1\leq O \left( \lambda d_Q\sqrt{\frac{S}{T}} +       \sqrt{\frac{S}{T}}  \left(4W^2 \sigma^2 + 2W^2 G_2^2 \right)     + C_1\right), \nonumber
	\end{equation} 
where    
$C_1\sim O(W^2(\tilde{p}d_Q+ \tilde{p}d_Q/K ))$, $\tilde{p}=(1-(1-p)^{L_Q})$, and $L_Q$ is the total circuit depth.
\end{theorem}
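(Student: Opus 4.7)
The plan is to adapt the standard non-convex convergence analysis of Local-SGD / FedAvg to the quantum setting, separately accounting for the two sources of gradient error: the systematic bias introduced by the depolarization channel of Eq.~(\ref{eqn:dep_noise}) and the shot-noise fluctuations from estimating Eq.~(\ref{eqn:est-grad-QNN}) with only $K$ measurements. I would first establish that each $\mathcal{L}_i$ is $L_g$-smooth in $\bm{\theta}$: because the QNN output $h(\bm{\theta},O,\rho_i)$ in Eq.~(\ref{eqn:QNN-ansatz}) is a trigonometric polynomial, its Hessian is bounded on the compact parameter domain, and the $\lambda\|\bm{\theta}\|^2$ regularizer only shifts the smoothness constant by $2\lambda$. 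A uniform bound $\|\nabla \mathcal{L}_i\|\le G_2$ and the variance assumption~(\ref{assum:bound_vari}) are the other structural ingredients I would rely on.

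Writing the synchronized update Eq.~(\ref{eqn:QUDIO-glb-upd}) as $\bm{\theta}^{(t+1)}=\bm{\theta}^{(t)}-\eta\sum_{w=0}^{W-1}\bar g^{(t,w)}$ with $\bar g^{(t,w)}=\tfrac{1}{Q}\sum_i g_i^{(t,w)}$, I would apply $L_g$-smoothness to obtain the one-round descent inequality
\begin{equation*}
\mathbb{E}\mathcal{L}(\bm{\theta}^{(t+1)})\le \mathbb{E}\mathcal{L}(\bm{\theta}^{(t)})-\eta\sum_{w}\mathbb{E}\langle\nabla\mathcal{L}(\bm{\theta}^{(t)}),\bar g^{(t,w)}\rangle+\tfrac{L_g\eta^2}{2}\,\mathbb{E}\Bigl\|\sum_w \bar g^{(t,w)}\Bigr\|^2.
\end{equation*}
The cross term is where the argument splits: I would insert $\pm\nabla\mathcal{L}_i(\bm{\theta}_i^{(t,w)})$ to decompose $\bar g^{(t,w)}-\nabla\mathcal{L}(\bm{\theta}^{(t)})$ into (i) a \emph{client-drift} part $\nabla\mathcal{L}_i(\bm{\theta}_i^{(t,w)})-\nabla\mathcal{L}(\bm{\theta}^{(t)})$, (ii) a \emph{quantum-bias} part $\mathbb{E}[g_i^{(t,w)}]-\nabla\mathcal{L}_i(\bm{\theta}_i^{(t,w)})$, and (iii) a zero-mean stochastic fluctuation controlled by Eq.~(\ref{assum:bound_vari}).

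The two quantitative lemmas I really have to prove are then: (a) a client-drift bound $\mathbb{E}\|\bm{\theta}_i^{(t,w)}-\bm{\theta}^{(t)}\|^2\le O(W\eta^2(\sigma^2+G_2^2))$, obtained by unrolling Eq.~(\ref{eq:param_update}) for $w\le W$ local steps, applying Jensen's inequality and then using the variance assumption plus the gradient bound; and (b) a coordinate-wise bias bound on $\mathbb{E}[g_i^{(t,w)}]-\nabla\mathcal{L}_i(\bm{\theta}_i^{(t,w)})$. For (b) I would exploit the key fact that under $\mathcal{N}_p$ in Eq.~(\ref{eqn:dep_noise}) every measurement expectation is rescaled as $\Tr(O\gamma)=(1-\tilde p)\Tr(OU\rho U^\dagger)+\tilde p\,\Tr(O)/2^N$, so each of the four expectations $\bar y,\bar y^{(+_j)},\bar y^{(-_j)}$ and the original $\hat y$ appearing in Eq.~(\ref{eqn:est-grad-QNN}) differs from its ideal counterpart in Eq.~(\ref{eqn:analy-grad-QNN}) by an $O(\tilde p)$ multiplicative contraction plus an $O(1/\sqrt K)$ shot-noise term. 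Squaring the product-of-two-estimators in Eq.~(\ref{eqn:est-grad-QNN}), using a uniform bound on $|\hat y|,|y|$ (the ``mild assumption'' of the theorem), and summing over the $d_Q$ coordinates produces the two pieces $\tilde p\,d_Q$ and $\tilde p\,d_Q/K$ that define $C_1$.

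Finally I would sum the descent inequality over $t=0,\dots,T-1$, telescope $\mathcal{L}(\bm{\theta}^{(t)})-\mathcal{L}(\bm{\theta}^{(t+1)})$, divide by $\eta T$, and tune $\eta\asymp\sqrt{S/T}/W$ so that the regularizer term $\lambda d_Q\sqrt{S/T}$, the drift/variance term $\sqrt{S/T}\,(4W^2\sigma^2+2W^2G_2^2)$, and the residual quantum bias $C_1$ appear additively in $R_1$, exactly matching the claimed bound. The step I expect to be the main obstacle is (b): the product structure $(\bar y-y)(\bar y^{(+)}-\bar y^{(-)})$ in the parameter-shift gradient mixes biased multiplicative errors with additive shot noise, so the cross-terms must be controlled carefully, and getting the $d_Q$ dependence \emph{linear} (rather than $d_Q^2$) in $C_1$ requires invoking the uniform boundedness assumption to keep the ``bias times variance'' cross-term from blowing up.
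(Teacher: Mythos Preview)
Your high-level strategy---descent via smoothness, decompose the gradient error, bound the quantum bias coordinate-wise, telescope and tune $\eta$---is correct and is essentially what the paper does. The treatment of part (b), the quantum bias, matches the paper's Lemma~\ref{lem:thm1} closely: the paper also uses the $(1-\tilde p)$ rescaling of expectations under depolarization together with an explicit product-of-estimators decomposition (imported from \cite{du2020learnability}) to get the $\tilde p\,d_Q$ and $d_Q/K$ pieces.

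Where you diverge is in the handling of the inner-product term. You propose a genuine Local-SGD client-drift lemma, bounding $\mathbb{E}\|\bm{\theta}_i^{(t,w)}-\bm{\theta}^{(t)}\|^2$ and then using smoothness. The paper does \emph{not} do this: it simply invokes the uniform gradient bound $\|\nabla\mathcal{L}_i\|\le G_1$ from Lemma~\ref{lem:Lsmooth} to control every local gradient directly, and uses the polarization identity $\langle a,b\rangle=\tfrac12(\|a\|^2+\|b\|^2-\|a-b\|^2)$ rather than a drift argument. This cruder route is why the stated bound carries the raw $W^2G_2^2$ term and is also why the paper's learning rate is $\eta=1/\sqrt{ST}$, \emph{independent of $W$}, not your $\eta\asymp\sqrt{S/T}/W$. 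If you keep your drift lemma and your $W$-dependent step size you will obtain a tighter (and more standard) $W$-dependence than the theorem actually asserts; to reproduce the paper's bound verbatim you should drop the drift analysis, bound the local gradients by $G_2$ directly, and set $\eta=1/\sqrt{ST}$.
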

    
\begin{figure*}
\captionsetup[subfigure]{justification=centering}
\centering
\includegraphics[width=0.98\textwidth]{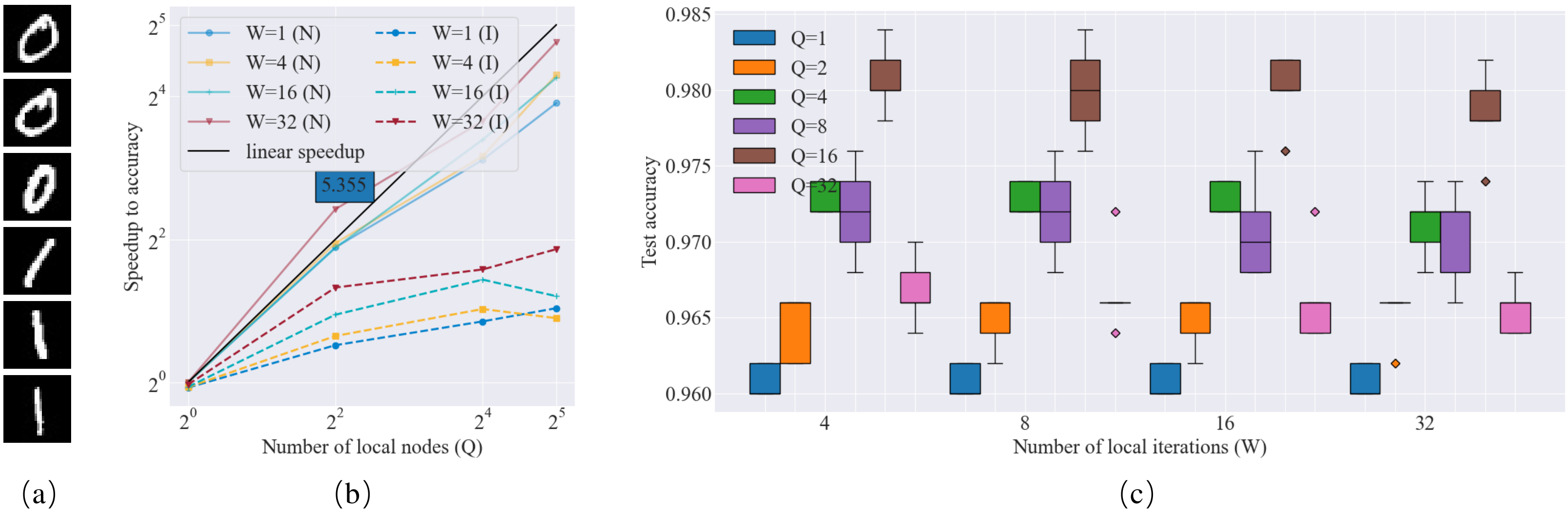}
\caption{\small{\textbf{Simulation results of QUDIO towards hand-written digits image classification.} (a) A visualization of some training examples sampled from the MNIST dataset. (b) Scaling behavior of QUDIO in clock-time for increasing number of local nodes $Q$ for varied number of local steps $W$. The labels `$W=a$  (I)' and `$W=a$  (N)' refer that the total number of local iterations is $W=a$ under the ideal and NISQ scenarios respectively. The hyper-parameters settings for the NISQ case are $p=10^{-5}$ and $K=100$. (c) A box plot that illustrates the achieved test accuracy of QUDIO with varied $W$ and $Q$ in the NISQ scenario, where the hyper-parameters settings are same with those described in (b).  }}
\label{fig:qnn}
\end{figure*}

The results of Theorem \ref{thm:conv-qnn} deliver three-fold implications. First, large system noise and few number of measurements may induce the optimization of QUDIO to be divergent, since the term $C_1$ is independent with $T$ and is amplified by $p$ and $1/K$. This observation hints the importance of integrating error mitigation techniques \cite{cai2020mitigating,du2020quantum,mcclean2020decoding,strikis2020learning} into QUDIO to enhance its trainability. Second, in the NISQ scenario, reducing the iteration number of local updating $W$ suggests a better performance, because $C_1$ is proportional to $W$. This phenomenon is starkly contrast with classical distributed optimization methods, which adopt large $W$ to alleviate the communication overhead. Last but not least, under the ideal setting, the convergence rate between conventional QNNs and QUDIO is identical, i.e., both of them scale with $O(1/\sqrt{T})$ with respect to the step number $T$ \cite{du2020learnability}. Celebrated by the joint optimization strategy, the similar convergence rate warrants that QUDIO promises a linear runtime speedup with respect to the increased number of local nodes $Q$.  
 
\textbf{Remark.} The developed tools in the proof of Theorem \ref{thm:conv-qnn} can be extended to analyze QNNs with other loss functions, quantum noisy models, and optimizers. Moreover, we would like to emphasize that naively imitating classical distributed algorithms to design distributed VQAs is suboptimal, since the inevitable biased gradient information in the quantum scenario may incur a deficient convergence.

\subsection{Numerical simulations}
We carry out numerical simulations to exhibit how QUDIO accelerates QNNs when dealing with a standard binary classification task with a large size of training examples. More precisely, the exploited dataset is distilled from a hand-written digits images dataset, called MNIST dataset \cite{lecun1998mnist}, which contains $256$ training examples and $500$ test examples labeled with digits `0' and `1'. Fig.~\ref{fig:qnn}(a) visualizes some examples in the distilled dataset. The amplitude encoding method and the  hardware-efficient ansatz are used to set up all $Q$ local nodes. The hyper-parameters settings are as follows. The number $Q$ ranges from $1$ to $32$. The number of local iterations has six settings, i.e., $W\in[1,2,4,8,16,32]$. In the NISQ setting, we set $K\in[5,100]$ and $p\in[10^{-4}, 10^{-1}]$. Each setting is repeated with $5$ times to collect the statistical results. See Appendix \ref{app:qnn} for the omitted implementation details.

To better quantify the performance of QUDIO from different angles, we introduce two metrics, i.e., the speedup to accuracy and the test accuracy, to evaluate the achieved results. Namely, the former considers the speedup ratio of QUDIO compared with the setting $Q=1$, i.e., supposes that the train accuracy reaches a predefined threshold (e.g., $95\%$) in  $T_1$ ($T_2$) clock-time for $Q=1$ ($Q=a$), the speedup to accuracy is evaluated by $T1/T2$. The latter allows us to compare the top test accuracy of QUDIO within a fixed number of global steps $T$ with varied $Q$ and $W$.

Fig.~\ref{fig:qnn} exhibits our simulation results. As shown in Fig.~\ref{fig:qnn} (b), for both the ideal and NISQ cases, QUDIO gains the speedup when increasing the number of local nodes $Q$. Strikingly, QUDIO can even reach a superlinear speedup in the NISQ scenario when $W=32$, e.g., it achieves $5.355$ times speedup for $Q=4$. This phenomenon indicates that QUDIO is insensitive to the communication bottleneck, which differs from distributed-DNNs \cite{Jeffrey2012}. Moreover, the distinct scaling behavior of QUDIO between the ideal and the NISQ cases is mainly caused by the fact that the evaluation of the analytic gradients in the ideal case is extremely fast and the communication cost dominates the runtime cost.  Fig.~\ref{fig:qnn}(c) pictures the statistical results of test accuracy for QUDIO. For all settings of $Q$, an increased $W$ generally degrades the performance of QUDIO in the statistical view. These results partially echo with Theorem \ref{thm:conv-qnn} such that larger $W$ suggests worse performance. An evidence is when $Q=16$, QUDIO achieves the best test accuracy in $W=4$. See Appendix \ref{app:qnn} for more  results and comprehensive investigation about the capability of QUDIO.

\begin{figure*}[htp]
\captionsetup[subfigure]{justification=centering}
\centering
\includegraphics[width=0.98\textwidth]{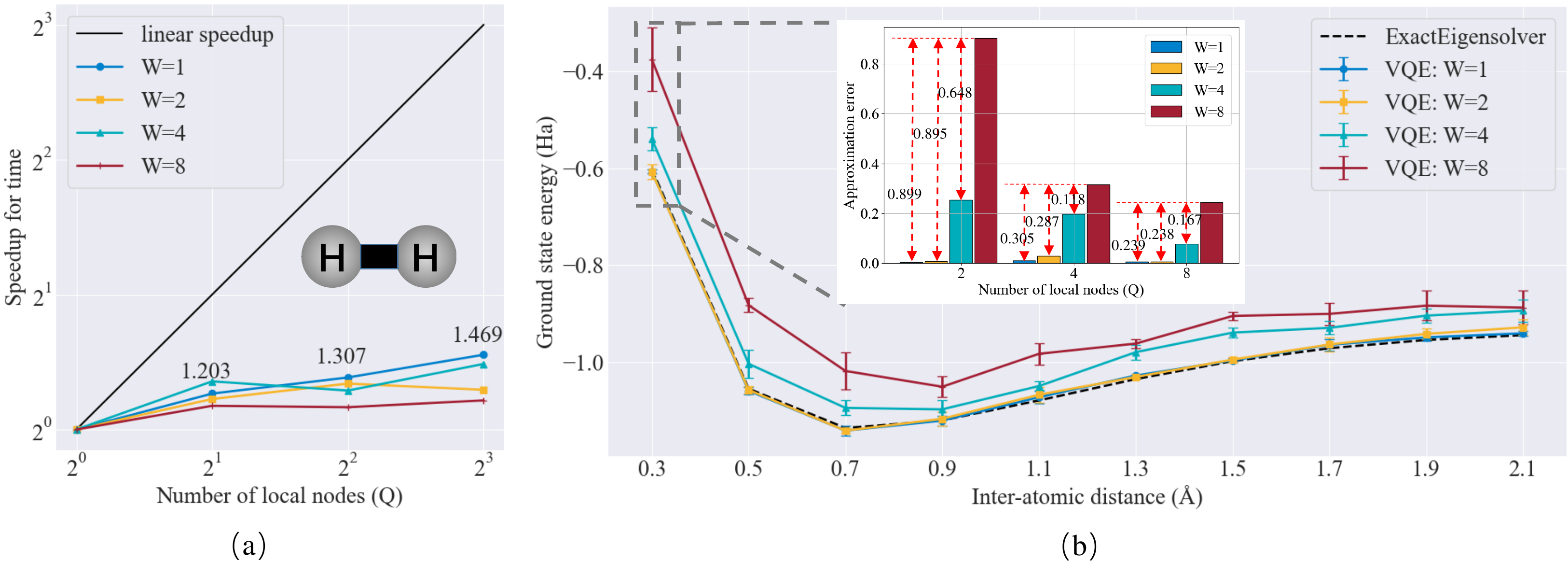}
\caption{\small{\textbf{Simulation results of QUDIO towards the ground state energy estimation of hydrogen molecule.} (a) The  speedup ratio with respect to different number of local nodes $Q$ and local iterations $W$. (b) The potential energy surface estimated by QUDIO. The black dotted line represents the exact ground state energy. The inner plot compares the  error between the ground truth and the estimated results of QUDIO with the case of $0.3\mathrm{\AA}$ inter-atomic distance.}}
\label{fig:vqe}
\end{figure*}

\section{Accelerate VQE by QUDIO}\label{sec:QUDIO-VQE} Variational quantum eigen-solvers (VQEs) \cite{Cervera2021meta-variational,kandala2017hardware,peruzzo2014variational,tang2021qubit} belong to another pivotal subclass of VQAs and have a broad usage of tackling quantum chemistry problems such as ground state estimation. The paradigm of VQEs is analogous to QNNs and other VQAs, which is completed by minimizing a problem-specific loss via gradient descent methods. Define the input Hamiltonian as 
\[H=\sum_{i=1}^n  \alpha_i H_i \in \mathbb{C}^{2^N\times 2^N},\] where $H_i$ refers to the $i$-th local Hamiltonian term and $\alpha_i\in\mathbb{R}$ is the corresponding coefficient. Without loss of generality, suppose that $H_i\in\{\X,\Y,\Z,\I\}^{\otimes N}$ is generated by Pauli operators.  The loss function of VQEs yields
\begin{equation}\label{eqn:loss_VQE}
	\mathcal{L}(\bm{\theta}, H) =    \Tr(H U(\bm{\theta})\rho_0 U(\bm{\theta})^{\dagger}),
\end{equation}
where $\rho_0=(\ket{0}\bra{0})^{\otimes N}$ is a fixed $N$-qubit quantum state and  $U(\bm{\theta})$ is the ansatz defined in Eq.~(\ref{eqn:QNN-ansatz}). Note that to evaluate $\mathcal{L}(\bm{\theta}, H)$, conventional VQEs use a single quantum chip to sequentially compute the results $\{\Tr(H_i U(\bm{\theta})\rho_0 U(\bm{\theta})^{\dagger})\}_{i=1}^n$ followed by a linear combination with $\{\alpha_i\}$. This implies the computational hardness of VQEs when the number of terms $n$ becomes large.         
 
When applying QUDIO to accelerate VQEs, the central server  splits the set of local Hamiltonians and their coefficients into $Q$ subgroups $\{\mathcal{S}_i\}_{i=1}^Q$ and assigns them into $Q$ local nodes $\{\mathcal{Q}_i\}_{i=1}^Q$. For instance, in the extreme case of $Q=n$, we have $\mathcal{S}_i= \{(\alpha_i, H_i)\}$. For the case of $Q<n$, $n$ local hamiltonians are divided into $Q$ subgroups such that $\mathcal{S}_i=\cup_{j\in S_i} \{(\alpha_j, H_j)\}$, where $S_i$ refers to the $i$-th index set with $\cup_{i=1}^Q S_i =[n]$ and $S_i\cap S_j=\emptyset$ when $i\neq j$. Moreover, the training procedure follows the same manner with accelerating QNNs.  The updating rule of the local node $\mathcal{Q}_i$ for $\forall i\in[Q]$ yields
\begin{equation}
    \bm{\theta}_i^{(t,w+1)} = \bm{\theta}_i^{(t,w)} - \eta   g_i(\bm{\theta}_i^{(t,w)}, H_{S_i}),
\end{equation}
where $g_i(\bm{\theta}_i^{(t,w)}, H_{S_i})$ refers to the estimated gradient of $\nabla \mathcal{L}_i(\bm{\theta}_i^{(t,w)}, H_{s_i})=\Tr(H_{S_i}U(\bm{\theta}_i^{(t,w)})\rho_0U(\bm{\theta}_i^{(t,w)})^\dagger)$ with $H_{S_i}=\sum_{j\in S_i}\alpha_j H_j$.   
  
In the subsequent subsections, we first exhibit the explicit form of the estimated gradient $g_i(\bm{\theta}_i^{(t,w)}, H_{S_i})$ and then conduct extensive numerical simulations to validate performance of QUDIO.  Remarkably, to facilitate simulation, here we mainly focus on the scenario in which the estimation error is caused by the finite number of measurement and the system is noiseless.

\subsection{The acquisition of the estimated gradients }
Before diving into deriving the explicit of estimated gradient, let us first recall the analytic gradient of $\mathcal{L}_i(\bm{\theta}_i^{(t,w)}, H_{s_i})$. Specifically, based on the  parameter shift rule, the $j$-th component of the analytic  gradients is 
\begin{equation}\label{eqn:grad-analy-VQE}
 \nabla_j\mathcal{L}_i(\bm{\theta}_i^{(t,w)}, H_{S_i})=   \frac{\hat{y}_i^{(t,w,+_j)}- \hat{y}_i^{(t,w,-_j)}}{2},
\end{equation}
where $\bm{\theta}^{(t,w,\pm)}_i=\bm{\theta}^{(t,w)}_i\pm \frac{\pi}{2} \bm{e}_j$ and  $\hat{y}_i^{(t,w,\pm_j)}=\sum_{j\in S_i}\alpha_j \Tr(H_{j}U(\bm{\theta}_i^{(t,w,\pm)})\rho_0U(\bm{\theta}_i^{(t,w,\pm)})^\dagger)$.

When a finite number of measurement is allowable, the trace terms of $\hat{y}_i^{(t,w,\pm_j)}$ in Eq.~(\ref{eqn:grad-analy-VQE}) can only be acquired with estimation error. The detailed procedure to estimate $\Tr(H_{j}U(\bm{\theta}_i^{(t,w,\pm)})\rho_0U(\bm{\theta}_i^{(t,w,\pm)})^\dagger)$ is as follows. To estimate this result by measuring the quantum state $\rho=U(\bm{\theta}_i^{(t,w)})\rho_0U(\bm{\theta}_i^{(t,w)})^\dagger$ along the computational basis, an alignment operation should be executed. Mathematically, the quantum state $\rho$ needs to interact with the unitary operator $R_i$ to generate the quantum state $\rho'=R_j\rho R_j^\dagger$, where $R_j$ is composed of a sequence of rotational single-qubit gates whose row vectors are the eigen-basis of $H_j$. Define $\bm{h}_i$ as a vector that collects the eigenvalues of $H_j$,  $V_k\sim \mathrm{Cat}(2^N, \bm{p}(\rho'))$ as a random variable following the categorical distribution (also called generalized Bernoulli distribution), and  $\bm{p}(\rho')\in\mathbb{R}^{2^N}$ refers as a discrete distribution with $\bm{p}_j(\rho')=\mathrm{Tr}(\rho'\ket{j}\bra{j})$. Following the above notations, the term $\Tr(H_j\rho')$ is estimated by $\frac{1}{K}\sum_{k=1}^K\bm{h}_{i,V_k}$, where $\bm{h}_{i,V_k}$ is the $V_k$-th eigenvalue of $H_j$. Based on Eq.~(\ref{eqn:grad-analy-VQE}), the estimated gradient satisfies 
\begin{equation}
 g_{i,j}(\bm{\theta}_i^{(t,w)}, H_{S_i}) = \frac{\bar{y}_i^{(t,w,+_j)}- \bar{y}_i^{(t,w,-_j)}}{2}, \forall j \in[d_Q],
\end{equation}
where $\bar{y}_i^{(t,w,\pm_j)}=\sum_{i\in S_i}\alpha_i\frac{1}{K}\sum_{k=1}^K\bm{h}_{i,V_k^{(\pm_j)}}$  and the definition of $V_k^{(\pm_j)}\sim \mathrm{Cat}(2^N, \bm{p}(\rho^{(\pm_j)}))$ follows the same manner with Eq.~(\ref{eqn:est-grad-QNN}).

Considering that the training procedure is exactly identical to the way of applying QUDIO to accelerate QNNs, Theorem \ref{thm:conv-qnn} can also describe the convergence behavior of QUDIO for  accelerating VQEs. 

\subsection{Numerical simulations}          
We perform numerical simulations to validate the effectiveness of QUDIO to accelerate conventional VQEs. To do so, we apply QUDIO to estimate the ground state energy of hydrogen molecule $\mathrm{H}_2$ with varied bond distance, whose Hamiltonian contains $n=15$ local Hamiltonian terms and requires $N=4$ qubits with $H=\sum_{i=1}^{15}\alpha_i H_i\in \mathbb{C}^{2^4\times 2^4}$ \cite{kandala2017hardware}. The implementation of all local nodes mainly follows the proposal \cite{kandala2017hardware} such that the input quantum state is $\rho_0=\ket{1100}\bra{1100}$ and the trainable unitary refers to the hardware-efficient ansatz. The hyper-parameters settings are as follows. The number of local nodes and local iterations is set as $Q\in\{1, 2, 4 ,8\}$ and  $W\in[1,2,4,8]$, respectively. We fix the number of measurements to be $K=100$. Each setting is repeated with $5$ times to collect the statistical results. See Appendix \ref{app:vqe} for the omitted implementation details.

The simulation results are shown in  Fig.~\ref{fig:vqe}. In particular, the left subplot illustrates the speedup ratio of QUDIO in terms of the factors $Q$ and $W$. For all settings of $W$, QUDIO gains runtime speedups by involving more local nodes. For example, we obtain $1.469$ times acceleration when utilizing $8$ local nodes to optimize VQE with $1$ local update. Although QUDIO provides certain speedups in the task of estimating the ground state energy of hydrogen molecule, there is a clear gap towards the linear speedup ratio. It is noteworthy that this gap arises from the simplicity of the manipulated problem, where the communication overhead dominates the total computational runtime. We expect that QUDIO has the ability to earn higher speedup ratio for large-scale tasks.

The potential energy surface estimated by QUDIO is presented in Fig.~\ref{fig:vqe}(b). The outer plot suggests that for all bond distance settings ranging from $0.3\mathrm{\AA}$ to $2.1\mathrm{\AA}$, QUDIO obtains the best performance with $W=1$, which is almost the same with the exact values. By contrast, there exists an apparent separation between the exact values and the estimated results of QUDIO with $W=8$. The inner plot further evidences this phenomenon. Specifically, when the bond distance equals to  $0.3\mathrm{\AA}$, QUDIO witnesses the largest approximation error $0.9Ha$ with $W=8$ and $Q=2$, while the error is reduced to nearly zero with $W=1$ regardless of the number of local nodes. All of the above observations collaborate with Theorem \ref{thm:conv-qnn}, where decreasing $W$ warrants a better performance. Refer to Appendix \ref{app:vqe} for deep comprehension.

\section{Discussion and conclusion}\label{sec:conclusion}
\indent In this study, we devise QUDIO to accelerate VQAs with multiple quantum processors. We also provide theoretical analysis about how the system noise and the number of measurements influence the convergence of QUDIO. An attractive feature is that in the ideal setting, QUDIO obeys the asymptotic convergence rate with conventional QNNs, which ensures its runtime speedup with respect to the increased number of local nodes. The achieved numerical simulation results confirm the effectiveness of our proposal.  Particularly, in the NISQ scenario, QUDIO can achieve superline speedups in the measure of time-to-accuracy.

We remark that there are three orthogonal research directions towards the investigation of distributed VQAs. First,  instead of employing the synchronization approach used in QUDIO, it is intrigued to design asynchronous distributed-VQAs schemes with convergence guarantees, which may further reduce the communication overhead and maximally utilize quantum processors with distinct  qualities. Second, with the aim of reducing the runtime cost, it is important to integrate the effective measurement reduction algorithms with QUDIO and other distributed VQAs. Repressive examples contain grouping compatible operators \cite{kandala2017hardware,zhao2020measurement} and classical-shadows based methods \cite{huang2020predicting,struchalin2021experimental}. Last, a promising direction is combining QUDIO with a recent work \cite{zhang2021variational}, which splits the input quantum circuits into several individual quantum circuits with distributed optimization.               

The employment of cloud computing to execute QUDIO naturally invokes the security issue \cite{lu2020quantum}. For example, how to defend adversarial attack or prevent the private information leakage        during the training procedure. Initial studies have leveraged some notions such as differential privacy \cite{du2020quantum,du2021quantum}, hypothesis testing \cite{weber2021optimal}, and blind quantum computing \cite{li2021quantum} to address this issue. However, little is known about how these strategies effect the convergence rate. A deep understanding towards this topic is highly desired, which determines the applicability of distributed VQAs.

For these reasons, QUDIO and its variants, which marry the distributed techniques with VQAs, could substantially contribute to use NISQ machines to accomplish real-world problems with quantum advantages.


 \newpage   
\clearpage 
\medskip

\newpage   
\clearpage 
\appendix 
\onecolumngrid

\section{The proof of Theorem \ref{thm:conv-qnn}}\label{append:QUDIO-QNN-conv}

The outline of this section is as follows. In Appendix \ref{subsec:AppendB-1}, we  simplify some notations and introduce basic concepts in optimization theory for ease of discussion. Next, in Appendix \ref{subsec:AppendB-2},  we demonstrate the proof details of Theorem \ref{thm:conv-qnn}.  

\subsection{Basic notations and concepts}\label{subsec:AppendB-1}

\noindent\textbf{Notations.} Let us first simplify some notations defined in the main text to facilitate the derivation. Recall that the loss function of the $i$-th local node is denoted by $\mathcal{L}_i\left(\bm{\theta}_i^{(t,w)}, \bm{x}_i^{(t,w)}\right)$ with $\bm{x}_i^{(t,w)}$ being an example uniformly sampled from the sub-dataset $\mathcal{D}_i$. In this section, when no confusion occurs, we define the loss for the $i$-th local node $\forall i\in[Q]$ with the \textit{whole} sub-dataset $\mathcal{D}_i$ as 
\begin{equation}
	\mathcal{L}_i\left(\bm{\theta}_i^{(t,w)}, \mathcal{D}_i\right) \equiv  \mathcal{L}_i\left(\bm{\theta}_i^{(t,w)}\right)= \frac{1}{2|\mathcal{D}_i|}\sum_{j=1}^{|\mathcal{D}_i|}  \left(h(\bm{\theta}_i^{(t,w)},O,\rho_j) -  y_j\right)^2 + \lambda\|\bm{\theta}_i^{(t,w)}\|^2_2, 
\end{equation}      
where $\rho_j$ and $y_j$ refers to the encoded quantum example and corresponding label with respect to the $j$-th example in $\mathcal{D}_i$. Besides, we simplify the global loss function as   
\begin{equation}
	\mathcal{L}\left(\bm{\theta}_i^{(t,w)}, \mathcal{D}_i\right) \equiv  \mathcal{L}\left(\bm{\theta}_i^{(t,w)}\right) = \frac{1}{Q}\sum_{i=1}^Q \mathcal{L}_i \left(\bm{\theta}_i^{(t,w)}, \mathcal{D}_i\right).
\end{equation}
Following the same routine,  the gradients of the loss $\mathcal{L}_i(\bm{\theta}_i^{(t,w)})$ (or $\mathcal{L}\left(\bm{\theta}_i^{(t,w)}\right)$) are written as $\nabla \mathcal{L}_i(\bm{\theta}_i^{(t,w)}, \mathcal{D}_i)$ and $ \nabla \mathcal{L}_i(\bm{\theta}_i^{(t,w)})$ (or $\nabla \mathcal{L}(\bm{\theta}_i^{(t,w)}, \mathcal{D}_i)$ and $ \nabla \mathcal{L}(\bm{\theta}_i^{(t,w)})$) interchangeably. 

In QUDIO, to decrease runtime,  the optimizer of the $i$-th local node for $\forall i\in[Q]$ only requires the gradients of $\mathcal{L}_i\left(\bm{\theta}_i^{(t,w)}, \bm{x}_i^{(t,w)}\right)$ instead of the whole sub-dataset $\mathcal{D}_i$. To distinguish with $\mathcal{L}_i(\bm{\theta}_i^{(t,w)})$, we denote 
\begin{equation}
	\mathcal{L}_i\left(\bm{\theta}_i^{(t,w)},  \bm{x}_i^{(t,w)}\right) = \barL_i\left(\bm{\theta}_i^{(t,w)}\right),~\text{and}~\nabla \mathcal{L}_i\left(\bm{\theta}_i^{(t,w)},  \bm{x}_i^{(t,w)}\right) = \nabla \barL_i\left(\bm{\theta}_i^{(t,w)}\right).
\end{equation}     

We note that the gradients between $ \nabla \mathcal{L}_i\left(\bm{\theta}_i^{(t,w)}\right)$ and $\nabla \barL_i\left(\bm{\theta}_i^{(t,w)}\right)$ have the relationship 
\begin{equation}\label{append:eqn:eqiv-est-analy}
	\mathbb{E}_{\bm{x}_i^{(t,w)}\sim \mathcal{D}_i} \left[ \nabla \barL_i\left(\bm{\theta}_i^{(t,w)}\right) \right] = \nabla \mathcal{L}_i\left(\bm{\theta}_i^{(t,w)}\right),
\end{equation}
where the expectation is taken over the randomness of sampled examples. 

\medskip 
\noindent\textbf{Basic concepts in optimization theory.} We introduce two definitions, i.e., $S$-smooth and $G$-Lipschitz \cite{boyd2004convex}, which are employed to quantify properties of loss functions and achieve the proof of Theorem \ref{thm:conv-qnn}.
\begin{definition}\label{def:S-smoo-G_lip}
A function f is $S$-smooth over a set $\mathcal{C}$ if $\nabla^2 f(\bm{u})\preceq S\mathbb{I}$ with $S>0$ and $\forall \bm{u}\in \mathcal{C}$. A function f is  $G$-Lipschitz over a set $\mathcal{C}$ if for all $\bm{u},\bm{w}\in \mathcal{C}$, we have $|f(\bm{u}) - f(\bm{w})|\leq G\|\bm{u}-\bm{w}\|_2$.  
\end{definition}

As proved in the study \cite{du2020learnability}, the mean square error loss in Eq.~(\ref{eqn:loss-QNN}) is smooth and Lipschitz. 
\begin{lemma}[Lemma 2, \cite{du2020learnability}]\label{lem:Lsmooth}
 The loss function $\mathcal{L}$ in Eq.~(\ref{eqn:loss-QNN})  is $S$-smooth with $S = (3/2+ \lambda)d^2$ and $G_1$-Lipschitz with $G_1=d(1+3\pi\lambda)$.  
 \end{lemma}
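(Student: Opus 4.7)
The plan is to verify each clause of Lemma~\ref{lem:Lsmooth} directly from Definition~\ref{def:S-smoo-G_lip}: bound $\|\nabla\mathcal{L}(\bm{\theta})\|_2$ and $\|\nabla^2\mathcal{L}(\bm{\theta})\|_{op}$ uniformly on $[0,2\pi)^{d_Q}$. The entire quantum content reduces to three pointwise facts about the prediction $h_i(\bm{\theta}):= \Tr(OU(\bm{\theta})\rho_i U(\bm{\theta})^\dagger)$: first, $|h_i(\bm{\theta})|\le 1$, which follows from $\|O\|_{op}\le 1$ (as $O$ is a POVM element) and $\Tr(\rho_i)=1$; second, $|\partial_j h_i(\bm{\theta})|\le 1$ for every $j\in[d_Q]$, obtained from the parameter-shift identity in Eq.~(\ref{eqn:analy-grad-QNN})--(\ref{eqn:est-grad-QNN}) together with the first bound; third, $|\partial_{jk}^2 h_i(\bm{\theta})|\le 1$, obtained by iterating the parameter-shift rule once more. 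These three inequalities together with $|y_i|\le 1$ (hence $|f_i|\le 2$ where $f_i:=h_i-y_i$) and $|\theta_j|\le 2\pi$ are the only analytic ingredients.

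Next, I would apply the product rule to $\mathcal{L}(\bm{\theta})=\frac{1}{2n}\sum_i f_i^2+\lambda\|\bm{\theta}\|_2^2$ to obtain the explicit coordinate expressions
\begin{equation}
\partial_j\mathcal{L}=\frac{1}{n}\sum_{i=1}^n f_i\,\partial_j h_i+2\lambda\theta_j,\qquad \partial_{jk}^2\mathcal{L}=\frac{1}{n}\sum_{i=1}^n\bigl(\partial_j h_i\,\partial_k h_i+f_i\,\partial_{jk}^2 h_i\bigr)+2\lambda\delta_{jk}.\nonumber
\end{equation}
Substituting the three pointwise bounds immediately yields entrywise estimates of the form $|\partial_j\mathcal{L}|\lesssim 1+\pi\lambda$ and $|\partial_{jk}^2\mathcal{L}|\lesssim 1+\lambda$, independent of $n$ and of the specific ansatz architecture.

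Finally, I would lift these entrywise bounds to the norm quantities required by Definition~\ref{def:S-smoo-G_lip}. For the Lipschitz claim, I invoke $\|\nabla\mathcal{L}\|_2\le\|\nabla\mathcal{L}\|_1\le d_Q\max_j|\partial_j\mathcal{L}|$, which produces a bound scaling as $d_Q(1+3\pi\lambda)$ after inserting optimal numerical constants for $|f_i|$ and $|\theta_j|$. For the smoothness claim, I use the symmetry of the Hessian and either the Frobenius bound $\|\nabla^2\mathcal{L}\|_{op}\le\|\nabla^2\mathcal{L}\|_F\le d_Q\max_{j,k}|\partial_{jk}^2\mathcal{L}|$ or the row-sum (Gershgorin-type) bound $\|\nabla^2\mathcal{L}\|_{op}\le\max_j\sum_k|\partial_{jk}^2\mathcal{L}|\le d_Q\max_{j,k}|\partial_{jk}^2\mathcal{L}|$. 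Combining this with the entrywise estimate gives a bound of order $d_Q^{\,2}(3/2+\lambda)$, matching the stated $S$.

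The main obstacle is purely bookkeeping: making the $O(d_Q)$ and $O(d_Q^{\,2})$ norm-conversion factors come out with the exact constants $1{+}3\pi\lambda$ and $3/2{+}\lambda$ advertised in the statement, rather than the looser $2{+}4\pi\lambda$ and $3{+}2\lambda$ that the naive chain above produces. Tightening to the advertised constants requires (i)~sharpening $|f_i|\le 2$ to $|f_i|\le 1$ using the specific label range $y_i\in\{0,1\}$ used in the paper, (ii)~separating the contribution of the regulariser $2\lambda\theta_j$ from the data-dependent part before taking the $\ell_1$ aggregation, and (iii)~noting that $\partial_j h_i\,\partial_k h_i$ and $f_i\,\partial_{jk}^2 h_i$ are bounded with different constants so their sum can be improved below $3$. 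Once these refinements are in place, the bounds match Lemma~2 of Ref.~\cite{du2020learnability} verbatim, which I would cite as the final reference for the exact prefactors.
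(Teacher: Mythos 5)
Your proposal is sound and follows essentially the same route as the actual proof: the paper itself does not prove Lemma~\ref{lem:Lsmooth} but imports it verbatim from Lemma~2 of \cite{du2020learnability}, whose argument is exactly your chain of bounding $|h_i(\bm{\theta})|\le 1$ via the POVM element $O$, bounding first and second derivatives by $1$ through the (iterated) parameter-shift rule, applying the product rule to Eq.~(\ref{eqn:loss-QNN}), and converting entrywise gradient and Hessian bounds into $\ell_2$ and operator-norm bounds. Your deferral of the exact prefactors $1+3\pi\lambda$ and $3/2+\lambda$ to the cited reference is appropriate, since this paper does not rederive those constants either, and your refinements (i)--(iii) are precisely the tightenings needed to recover them.
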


\subsection{Proof details}\label{subsec:AppendB-2}
The proof of Theorem \ref{thm:conv-qnn} exploits the relation between the analytic and estimated gradients of QNN, i.e., $\nabla_j\barL_i(\bm{\theta}_i^{(t,w)})$ and $\nabla_j\mathcal{L}_i(\bm{\theta}_i^{(t,w)})$. 
\begin{lemma}\label{lem:thm1}
Denote $\tilde{p}=1-(1-p)^{L_Q}$ with $L_Q$ being the quantum circuit depth.	The discrepancy between the analytic gradients $\nabla \mathcal{L}_i(\bm{\theta}_i^{(t,w)})$ and the estimated gradients $\nabla \barL_i(\bm{\theta}_i^{(t,w)})$ for the $i$-th node satisfies
	\begin{eqnarray}
		&& \mathbb{E}_{ \bm{\varsigma}_{i,j}^{(t,w)}}\left[ \left\|    \nabla \barL_i(\bm{\theta}_i^{(t,w)}) - \nabla \mathcal{L}_i\left(\bm{\theta}_i^{(t,w)} \right) \right\|^2\right] \nonumber\\
		\leq && (\tilde{p}-2)^2\tilde{p}^2 \left\|\nabla \mathcal{L}_i\left(\bm{\theta}_i^{(t,w)}\right) \right\|^2 + \frac{(1-\tilde{p})^2\tilde{p}^2d_Q}{4} + (2-\tilde{p})^2\tilde{p}^2 G_1d_Q + \frac{7(1 - \frac{\tilde{p}}{2})^2 + \frac{1}{8}}{K}d_Q,
	\end{eqnarray}
	where the expectation is taking over the randomness of quantum system noise and the measurement error.
\end{lemma}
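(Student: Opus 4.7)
The plan is to bound the per-component error $\mathbb{E}[(g_{i,j}-\nabla_j\mathcal{L}_i(\bm\theta_i^{(t,w)}))^2]$ and then sum over $j\in[d_Q]$. Observe first that the regularization term $\lambda\bm{\theta}_{i,j}^{(t,w)}$ is deterministic and common to both the analytic gradient in Eq.~(\ref{eqn:analy-grad-QNN}) and the estimated gradient in Eq.~(\ref{eqn:est-grad-QNN}), so it cancels and we only need to control the product structure $(\bar y-y)(\bar y^{(+_j)}-\bar y^{(-_j)})/2$ versus $(\hat y-y)(\hat y^{(+_j)}-\hat y^{(-_j)})/2$. Writing $A=\bar y_i^{(t,w)}-y_i^{(t)}$, $a=\hat y_i^{(t,w)}-y_i^{(t)}$ and $B_\pm=\bar y_i^{(t,w,\pm_j)}$, $b_\pm=\hat y_i^{(t,w,\pm_j)}$, the core quantity is $\Delta_j=A(B_+-B_-)/2-a(b_+-b_-)/2$, which I would add-and-subtract into $(A-a)(B_+-B_-)/2+a[(B_+-b_+)-(B_--b_-)]/2$ so that each factor carries only one source of error at a time.

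Next I would decompose the noise itself into a bias piece and a statistical piece: under the depolarization channel in Eq.~(\ref{eqn:dep_noise}), $\mathbb{E}_K[\bar y^{(\pm_j)}]=(1-\tilde p)\hat y^{(\pm_j)}+\tilde p\,\Tr(O)/2^N$, so
\begin{equation}
\bar y^{(\pm_j)}-\hat y^{(\pm_j)}=\underbrace{-\tilde p\bigl(\hat y^{(\pm_j)}-\Tr(O)/2^N\bigr)}_{\text{bias }\beta_\pm}+\underbrace{\bar y^{(\pm_j)}-\mathbb{E}_K[\bar y^{(\pm_j)}]}_{\text{zero-mean }\xi_\pm},\nonumber
\end{equation}
and identically for $\bar y-\hat y$. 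Because $O$ is one element of a two-outcome POVM the outcomes lie in $\{0,1\}$, so $|\beta_\pm|\le\tilde p(1-\tilde p/2)\cdot 2$ after the usual slack, $|\xi_\pm|\le 1$, and $\mathrm{Var}(\xi_\pm)\le 1/(4K)$ from the Bernoulli sampling. Substituting these decompositions into $\Delta_j$ expands it into a sum of terms each of which is either pure bias (deterministic in the measurement randomness) or contains at least one zero-mean $\xi$; I would take expectation, use that cross terms with a single $\xi$ vanish, and bound the remaining squares by Cauchy--Schwarz/AM--GM.

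Summing over $j$, the bias contribution proportional to $a^2\bigl((b_+-b_-)/2\bigr)^2$ reassembles the parameter-shift representation of $\nabla_j\mathcal{L}_i(\bm\theta_i^{(t,w)})$ (minus the regularizer), which explains the appearance of $\|\nabla\mathcal{L}_i(\bm{\theta}_i^{(t,w)})\|^2$ with coefficient of order $(\tilde p-2)^2\tilde p^2$. The remaining bias cross terms contribute the $(1-\tilde p)^2\tilde p^2 d_Q/4$ and $(2-\tilde p)^2\tilde p^2 G_1 d_Q$ pieces, where the $G_1$-Lipschitz property of $\mathcal{L}$ from Lemma~\ref{lem:Lsmooth} controls the residual $|a|$ uniformly. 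The statistical pieces collapse, since $\mathrm{Var}(\xi_\pm)\le 1/(4K)$, into the $d_Q/K$ term with constant $7(1-\tilde p/2)^2+1/8$ once one accounts for one $\xi$-squared, two independent $\xi$-squared, and the cross products between $\xi$ and the bounded bias pieces.

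The main obstacle is the bookkeeping in step three: the estimated gradient is a product of two noisy quantities sharing the same random variable $y_i^{(t)}$ but independent $B_\pm$, so one must be careful about which cross terms actually vanish under the expectation and which only simplify. Getting the exact constants $(\tilde p-2)^2\tilde p^2$, $(1-\tilde p)^2\tilde p^2/4$, $(2-\tilde p)^2\tilde p^2 G_1$, and $7(1-\tilde p/2)^2+1/8$ will require choosing the AM--GM split parameters with some care (probably tuned to a factor involving $\tilde p/2$), rather than any deep new idea. Once that accounting is done, summing over the $d_Q$ components is immediate and yields the stated inequality.
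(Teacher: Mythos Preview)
Your proposal is correct and follows essentially the same strategy as the paper: split each noisy output into a deterministic depolarization bias and a zero-mean finite-shot fluctuation, expand the product, use that cross terms carrying a single $\xi$ vanish in expectation, and bound the survivors with $G_1$ and the Bernoulli variance $1/(4K)$. The only difference is packaging: the paper imports the decomposition $\nabla_j\barL_i=(1-\tilde p)^2\nabla_j\mathcal L_i+C^{(t,w)}_{i,j,1}+\bm\varsigma^{(t,w)}_{i,j}$ as a ready-made identity (Lemma~\ref{lem:noise_QNN_gaussian}, taken from~\cite{du2020learnability}) and then squares it directly, whereas you rederive that identity from first principles---your add-and-subtract $(A-a)(B_+-B_-)/2+a[(B_+-b_+)-(B_--b_-)]/2$ is marginally less direct than writing each factor as mean-plus-fluctuation and multiplying out, but it lands in the same place.
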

The proof of the above lemma is given in Appendix \ref{append:subsec:proof-lem2}.

We are now ready to prove Theorem \ref{thm:conv-qnn}. 
\begin{proof}[Proof of Theorem \ref{thm:conv-qnn}]
 
The machinery of QUDIO in Alg.~\ref{alg:Dist-opt}  indicates that the difference between the trainable parameters at the $(t+1)$-th and $t$-th global steps   satisfies 
\begin{equation}\label{eqn:thm_emr_QNN_0}
	\bm{\theta}^{(t + 1)} -  \bm{\theta}^{(t)} = - \frac{\eta}{Q} \sum_{i=1}^Q\sum_{w=1}^W g_i^{(t,w)} ~.
\end{equation}

Supported by $S$-smooth property of  $\mathcal{L}(\bm{\theta})$ in Lemma \ref{lem:Lsmooth}, we have
\begin{equation}\label{eqn:thm_emr_QNN_1}
	\mathcal{L}\left(\bm{\theta}^{(t+1)}\right) -\mathcal{L}\left(\bm{\theta}^{(t)}\right) \leq \left\langle\nabla \mathcal{L}\left(\bm{\theta}^{(t)}\right), \bm{\theta}^{(t+1)}-\bm{\theta}^{(t)} \right\rangle +\frac{S}{2}\left\|\bm{\theta}^{(t+1)}-\bm{\theta}^{(t)}\right\|^2.
\end{equation}

 Combining Eqs.~(\ref{eqn:thm_emr_QNN_0}) and (\ref{eqn:thm_emr_QNN_1}), we obtain
\begin{eqnarray}\label{eqn:thm_emr_QNN_2}
	&& \mathbb{E}_{\bm{x}_i^{(t,w)}\sim \mathcal{D}_i, \bm{\varsigma}^{(t,w)}_{i,j}}\left[ \mathcal{L}\left(\bm{\theta}^{(t+1)}\right) -\mathcal{L}\left(\bm{\theta}^{(t)}\right) \right] \nonumber\\
	 \leq && \mathbb{E}_{\bm{x}_i^{(t,w)}\sim \mathcal{D}_i, \bm{\varsigma}^{(t,w)}_{i,j}}\left[ \left\langle\nabla \mathcal{L}\left(\bm{\theta}^{(t)}\right), \bm{\theta}^{(t+1)}-\bm{\theta}^{(t)} \right\rangle \right] +\frac{S}{2} \mathbb{E}_{\bm{x}_i^{(t,w)}\sim \mathcal{D}_i, \bm{\varsigma}^{(t,w)}_{i,j}}\left[ \left\|\bm{\theta}^{(t+1)}-\bm{\theta}^{(t)}\right\|^2 \right] \nonumber\\
	= &&  \underbrace{ \mathbb{E}_{\bm{x}_i^{(t,w)}\sim \mathcal{D}_i, \bm{\varsigma}^{(t,w)}_{i,j}}\left[ \left \langle\nabla \mathcal{L}\left(\bm{\theta}^{(t)}\right),  - \frac{\eta}{Q} \sum_{i=1}^Q\sum_{w=1}^W g_i^{(t,w)}  \right \rangle \right]}_\text{T2} +\frac{S\eta^2}{2} \underbrace{ \mathbb{E}_{\bm{x}_i^{(t,w)}\sim \mathcal{D}_i, \bm{\varsigma}^{(t,w)}_{i,j}}\left[ \left\|  \frac{1}{Q} \sum_{i=1}^Q\sum_{w=1}^W g_i^{(t,w)} \right\|^2 \right]}_\text{T1}. 
\end{eqnarray}
where the inequality uses the definition of the smooth function and the equality is supported by  Eq.~(\ref{eqn:thm_emr_QNN_1}).

We next derive the upper bounds of the terms T2 and T1 by first removing the data sample noise (i.e., calculating $\mathbb{E}_{\bm{x}_i^{(t,w)}\sim \mathcal{D}_i}[\cdot]$) and then removing the system noise (i.e., calculating $\mathbb{E}_{\bm{\varsigma}_{i,j}^{t,w}}[\cdot]$).  

\textit{\underline{The elimination of the data sample noise.}} For the term T1, we rewrite $g_i^{(t,w)}$ by $\nabla \barL_i(\bm{\theta}_i^{(t,w)})$ to remove the randomness of  sampling training examples. Mathematically, we have
\begingroup
\allowdisplaybreaks
\begin{eqnarray}\label{eqn:proof-thm1-QNN-3}
	&& \mathbb{E}_{\bm{x}_j \sim \mathcal{D}_i}\left[ \left\|  \frac{1}{Q} \sum_{i=1}^Q\sum_{w=1}^W g_i^{(t,w)} \right\|^2 \right] \nonumber\\
	\leq && \frac{W}{Q}  \sum_{i=1}^Q\sum_{w=1}^W \mathbb{E}_{\bm{x}_j \sim \mathcal{D}_i}\left[ \left\|    g_i^{(t,w)} \right\|^2 \right] \nonumber\\ 
	= && \frac{W}{Q}  \sum_{i=1}^Q\sum_{w=1}^W \mathbb{E}_{\bm{x}_j \sim \mathcal{D}_i}\left[ \left\|    g_i^{(t,w)} - \nabla \barL_i\left(\bm{\theta}_i^{(t,w)}\right) + \nabla \barL_i\left(\bm{\theta}_i^{(t,w)}\right) - \nabla \mathcal{L}_i\left(\bm{\theta}_i^{(t,w)}\right) + \nabla \mathcal{L}_i\left(\bm{\theta}_i^{(t,w)}\right) \right\|^2 \right] \nonumber\\ 
	\leq && \frac{2W}{Q}  \sum_{i=1}^Q\sum_{w=1}^W \mathbb{E}_{\bm{x}_j \sim \mathcal{D}_i}\left[ \left\|    g_i^{(t,w)} - \nabla \barL_i\left(\bm{\theta}_i^{(t,w)}\right) + \nabla \barL_i\left(\bm{\theta}_i^{(t,w)}\right) - \nabla \mathcal{L}_i\left(\bm{\theta}_i^{(t,w)}\right)  \right\|^2 \right] + \frac{2W}{Q}  \sum_{i=1}^Q\sum_{w=1}^W \mathbb{E}_{\bm{x}_j \sim \mathcal{D}_i}\left[ \left\|    \nabla \mathcal{L}_i\left(\bm{\theta}_i^{(t,w)}\right)  \right\|^2 \right]  \nonumber \\ 
	= && \frac{4W}{Q} \sum_{i=1}^Q\sum_{w=1}^W  \mathbb{E}_{\bm{x}_j \sim \mathcal{D}_i}\left[ \left\|    g_i^{(t,w)} - \nabla \barL_i\left(\bm{\theta}_i^{(t,w)}\right)  \right\|^2\right] + \frac{4W}{Q} \sum_{i=1}^Q\sum_{w=1}^W  \mathbb{E}_{\bm{x}_j \sim \mathcal{D}_i}\left[ \left\|    \nabla \barL_i\left(\bm{\theta}_i^{(t,w)}\right) - \nabla \mathcal{L}_i\left(\bm{\theta}_i^{(t,w)} \right) \right\|^2\right]  \nonumber\\
	&& +  \frac{2W}{Q}  \sum_{i=1}^Q\sum_{w=1}^W \mathbb{E}_{\bm{x}_j \sim \mathcal{D}_i}\left[ \left\|    \nabla \mathcal{L}_i\left(\bm{\theta}_i^{(t,w)}\right)  \right\|^2 \right] \nonumber\\
	\leq  &&  4W^2 \sigma^2 + \frac{4W}{Q} \sum_{i=1}^Q\sum_{w=1}^W    \underbrace{ \left\|    \nabla \barL_i\left(\bm{\theta}_i^{(t,w)}\right) - \nabla \mathcal{L}_i\left(\bm{\theta}_i^{(t,w)} \right) \right\|^2}_\text{T3}   +  2W^2 G_1^2   ,
\end{eqnarray} 
\endgroup
where the first inequality employs $\|\sum_{i=1}^n \bm{a}_i\|^2 \leq n \sum_{i=1}^n\|\bm{a}_i\|^2$, the second inequality uses the triangle inequality with $\|\bm{a}+\bm{b}\|^2\leq 2\|\bm{a}\|^2+2\|\bm{b}\|^2$, and the last inequality exploits the bounded variance of the estimated gradients in the assumption. 

The derivation of the upper bound of the term T2 is similar to the operations applied to T1, i.e., the estimated gradient $g_i^{(t,w)}$ is substituted with $\nabla \barL_i(\bm{\theta}_i^{(t,w)})$ to remove the randomness of the sampled data. Mathemetically, we have 
\begin{eqnarray}\label{eqn:proof-thm1-QNN-4}
	&& \mathbb{E}_{\bm{x}_j \sim \mathcal{D}_i}\left[ \left \langle\nabla \mathcal{L}\left(\bm{\theta}^{(t)}\right),  - \frac{\eta}{Q} \sum_{i=1}^Q\sum_{w=1}^W g_i^{(t,w)}  \right \rangle \right] \nonumber\\
	 =  && \left \langle\nabla \mathcal{L}\left(\bm{\theta}^{(t)}\right), \mathbb{E}_{\bm{x}_i} \left[ - \frac{\eta}{Q} \sum_{i=1}^Q\sum_{w=1}^W g_i^{(t,w)} \right]  \right  \rangle \nonumber\\
	 = && \left \langle\nabla \mathcal{L}\left(\bm{\theta}^{(t)}\right),   - \frac{\eta}{Q} \sum_{i=1}^Q\sum_{w=1}^W \nabla \barL_i\left(\bm{\theta}_i^{(t,w)}\right)    \right  \rangle \nonumber\\
	 = && - \frac{\eta}{2} \left\|\nabla \mathcal{L}\left(\bm{\theta}^{(t)}\right) \right\|^2  - \frac{\eta}{2} \left\|\frac{1}{Q} \sum_{i=1}^Q\sum_{w=1}^W \nabla  \barL_i\left(\bm{\theta}_i^{(t,w)}\right)  \right\|^2  + \frac{\eta}{2}\left\|\nabla \mathcal{L}\left(\bm{\theta}^{(t)}\right) - \frac{1}{Q} \sum_{i=1}^Q\sum_{w=1}^W \nabla \barL_i\left(\bm{\theta}_i^{(t,w)}\right) \right\|^2 \nonumber\\
	 \leq && - \frac{\eta}{2} \left\|\nabla \mathcal{L}\left(\bm{\theta}^{(t)}\right) \right\|^2     + \frac{\eta}{2}\left\|\nabla \mathcal{L}\left(\bm{\theta}^{(t)}\right) - \frac{1}{Q} \sum_{i=1}^Q\sum_{w=1}^W \nabla \barL_i\left(\bm{\theta}_i^{(t,w)}\right) \right\|^2 \nonumber\\
	 \leq && - \frac{\eta}{2} \left\|\nabla \mathcal{L}\left(\bm{\theta}^{(t)}\right) \right\|^2     + \frac{\eta}{2}\frac{W}{Q}\sum_{i=1}^Q\sum_{w=1}^W \underbrace{ \left\|\nabla \mathcal{L}_i\left(\bm{\theta}_i^{(t,w)}\right) -  \nabla \barL_i\left(\bm{\theta}_i^{(t,w)}\right) \right\|^2}_{T3},
\end{eqnarray} 
where the second equality uses the unbiased estimation property in terms of the sampled data in Eq.~(\ref{append:eqn:eqiv-est-analy}), the third equality uses $\langle \bm{a}, \bm{b}\rangle = \frac{1}{2}(\|\bm{a}\|^2 + \|\bm{b}\|^2- \|\bm{a}-\bm{b}\|^2)$, the second inequality employs the triangle inequality, and the last inequality exploits the explicit form of the analytic gradient such that  
\begin{equation}
	\nabla \mathcal{L}\left(\bm{\theta}^{(t)}\right) = \frac{1}{Q} \sum_{i=1}^Q\sum_{w=1}^W 	\nabla \mathcal{L}_i\left(\bm{\theta}^{(t,w)}\right),
\end{equation}  
and the relation $\|\sum_{i=1}^n \bm{a}_i\|^2 \leq n \sum_{i=1}^n\|\bm{a}_i\|^2$.  

In conjunction with Eqs.~(\ref{eqn:thm_emr_QNN_2}), (\ref{eqn:proof-thm1-QNN-3}), and (\ref{eqn:proof-thm1-QNN-4}), we obtain 
\begin{eqnarray}\label{eqn:thm_emr_QNN_5}
	&& \mathbb{E}_{\bm{x}_i^{(t,w)}\sim \mathcal{D}_i, \bm{\varsigma}^{(t,w)}_{i,j}}\left[ \mathcal{L}\left(\bm{\theta}^{(t+1)}\right) -\mathcal{L}\left(\bm{\theta}^{(t)}\right) \right] \nonumber\\
	 \leq && -\frac{\eta}{2} \left \|\nabla \mathcal{L}\left(\bm{\theta}^{(t)}\right) \right \|^2   +  \frac{\eta W}{2Q} \sum_{i=1}^Q\sum_{w=1}^W \mathbb{E}_{\bm{\varsigma}^{(t,w)}_{i,j}} \underbrace{ \left[  \left \|  \nabla \barL_i\left(\bm{\theta}_i^{(t,w)}\right) - \nabla \mathcal{L}_i\left(\bm{\theta}_i^{(t,w)}\right)  \right \|^2  \right] }_{T3} \nonumber\\
	&& +  \frac{S\eta^2}{2} \left(4W^2 \sigma^2 + 2W^2 G_1^2 \right) + \frac{S\eta^2}{2} \frac{4W}{Q} \sum_{i=1}^Q\sum_{w=1}^W  \underbrace{\mathbb{E}_{\bm{\varsigma}^{(t,w)}_{i,j}}\left[  \left\|    \nabla \barL_i\left(\bm{\theta}_i^{(t,w)}\right) - \nabla \mathcal{L}_i\left(\bm{\theta}_i^{(t,w)} \right) \right\|^2 \right]}_\text{T3}~.    
\end{eqnarray}

\textit{\underline{The elimination of the system and measurement noise.}} By employing Lemma \ref{lem:thm1}, the term T3 can be upper bounded by 
\begin{eqnarray}\label{eqn:thm_emr_QNN_6}
		&& \mathbb{E}_{ \bm{\varsigma}_{i,j}^{(t,w)}}\left[ \left\|    \nabla \barL_i\left(\bm{\theta}_i^{(t,w)}\right) - \nabla \mathcal{L}_i\left(\bm{\theta}_i^{(t,w)} \right) \right\|^2\right] \nonumber\\
		\leq && (\tilde{p}-2)^2\tilde{p}^2 \left\|\nabla \mathcal{L}_i\left(\bm{\theta}_i^{(t,w)}\right) \right\|^2 + \frac{(1-\tilde{p})^2\tilde{p}^2d_Q}{4} + (2-\tilde{p})^2\tilde{p}^2 G_1d_Q + \frac{7(1 - \frac{\tilde{p}}{2})^2 + \frac{1}{8}}{K}d_Q.
	\end{eqnarray}
    
Then, supported by Eqs.~(\ref{eqn:thm_emr_QNN_5}) and (\ref{eqn:thm_emr_QNN_6}), the upper bound in Eq.~(\ref{eqn:thm_emr_QNN_1}) satisfies 
\begin{eqnarray}\label{eqn:thm_emr_QNN_7}
	&& \mathbb{E}_{\bm{x}_j \sim \mathcal{D}_i,\bm{\varsigma}_{i,j}^{(t,w)}}\left[ \mathcal{L}\left(\bm{\theta}^{(t+1)}\right) -\mathcal{L}\left(\bm{\theta}^{(t)}\right) \right] \nonumber\\
	 \leq && -\frac{\eta}{2} \left \|\nabla \mathcal{L}\left(\bm{\theta}^{(t)}\right) \right \|^2  +   \frac{S\eta^2}{2} \left(4W^2 \sigma^2 + 2W^2 G_1^2 \right)    \nonumber\\
	 && + \left( \frac{S\eta^2}{2}\frac{4W}{Q}  + \frac{\eta W}{Q} \right)  \sum_{i=1}^Q\sum_{w=1}^W  \mathbb{E}_{\bm{x}_j \sim \mathcal{D}_i,\bm{\varsigma}_{i,j}^{(t,w)}}\left[ \underbrace{ \left\|    \nabla \barL_i\left(\bm{\theta}_i^{(t,w)}\right) - \nabla \mathcal{L}_i\left(\bm{\theta}_i^{(t,w)} \right) \right\|^2}_\text{T3} \right] \nonumber\\
	\leq &&   -\frac{\eta}{2} \left \|\nabla \mathcal{L}\left(\bm{\theta}^{(t)}\right) \right \|^2   +   \frac{S\eta^2}{2} \left(4W^2 \sigma^2 + 2W^2 G_1^2 \right)     \nonumber\\
	&&+ \left( \frac{S\eta^24W^2}{2} + \eta W^2 \right)  \left((\tilde{p}-2)^2\tilde{p}^2 G_1^2 +   \frac{(1-\tilde{p})^2\tilde{p}^2d_Q}{4} + (2-\tilde{p})^2\tilde{p}^2 G_1d_Q + \frac{7(1 - \frac{\tilde{p}}{2})^2 + \frac{1}{8}}{K}d_Q \right)~.
\end{eqnarray}

Rearranging the terms in Eq.~(\ref{eqn:thm_emr_QNN_7}), the norm of the gradients $\|\nabla \mathcal{L} (\bm{\theta}^{(t)}\|$ is upper bounded by 
\begin{eqnarray}\label{eqn:thm_emr_QNN_8}
	 \left \|\nabla \mathcal{L}\left(\bm{\theta}^{(t)}\right) \right \|^2 \leq && \frac{2}{\eta} \mathbb{E}_{\bm{x}_j \sim \mathcal{D}_i,\bm{\varsigma_i}}\left[   \mathcal{L}\left(\bm{\theta}^{(t)}\right) - \mathcal{L}\left(\bm{\theta}^{(t+1)}\right) \right] + S\eta \left(4W^2 \sigma^2 + 2W^2 G_1^2 \right)  \nonumber \\
	 +  &&    \left( S\eta4W^2 + 2 W^2 \right)  \left((\tilde{p}-2)^2\tilde{p}^2 G_1^2 +   \frac{(1-\tilde{p})^2\tilde{p}^2d_Q}{4} + (2-\tilde{p})^2\tilde{p}^2 G_1d_Q + \frac{7(1 - \frac{\tilde{p}}{2})^2 + \frac{1}{8}}{K}d_Q \right).
\end{eqnarray}

Summing over $t$ and dividing both sides by $T$ in Eq.~(\ref{eqn:thm_emr_QNN_8}), we obtain
\begin{eqnarray}
&&  \frac{1}{T}\sum_{t=1}^T  \left \|\nabla \mathcal{L}\left(\bm{\theta}^{(t)}\right) \right \|^2 \nonumber\\
\leq &&  \frac{2}{\eta} \mathbb{E}_{\bm{x}_j \sim \mathcal{D}_i,\bm{\varsigma_i}}\left[   \mathcal{L}\left(\bm{\theta}^{(1)}\right) - \mathcal{L}\left(\bm{\theta}^{(T+1)}\right) \right] + S\eta \left(4W^2 \sigma^2 + 2W^2 G_1^2 \right) \nonumber \\
	 +  &&    \left( S\eta4W^2 + 2 W^2 \right)  \left((\tilde{p}-2)^2\tilde{p}^2 G_1^2 +   \frac{(1-\tilde{p})^2\tilde{p}^2d_Q}{4} + (2-\tilde{p})^2\tilde{p}^2 G_1d_Q + \frac{7(1 - \frac{\tilde{p}}{2})^2 + \frac{1}{8}}{K}d_Q \right)  \nonumber\\
\leq && \frac{2+80\lambda d_Q}{\eta T}   +       S\eta  \left(4W^2 \sigma^2 + 2W^2 G_2^2 \right) \nonumber\\
    && + \left(  4S\eta W^2  + 2W^2 \right)  \left((\tilde{p}-2)^2\tilde{p}^2 G_1^2 +   \frac{(1-\tilde{p})^2\tilde{p}^2d_Q}{4} + (2-\tilde{p})^2\tilde{p}^2 G_1d_Q + \frac{7(1 - \frac{\tilde{p}}{2})^2 + \frac{1}{8}}{K}d_Q \right),
\end{eqnarray}
where the second inequality exploits the upper bound of the discrepancy of loss function. 

With setting $\eta=\sqrt{1/ST}$, we achieve
\begin{eqnarray}
&&  \frac{1}{T}\sum_{t=1}^T  \left \|\nabla \mathcal{L}\left(\bm{\theta}^{(t)}\right) \right \|^2  \nonumber\\
\leq && (2+80\lambda d_Q) \sqrt{\frac{S}{T}}  +       \sqrt{\frac{S}{T}}  \left(4W^2 \sigma^2 + 2W^2 G_2^2 \right) \nonumber\\
    && + \left(  4  W^2 \sqrt{\frac{S}{T}}   + 2W^2 \right)  \left((\tilde{p}-2)^2\tilde{p}^2 G_1^2 +   \frac{(1-\tilde{p})^2\tilde{p}^2d_Q}{4} + (2-\tilde{p})^2\tilde{p}^2 G_1d_Q + \frac{7(1 - \frac{\tilde{p}}{2})^2 + \frac{1}{8}}{K}d_Q \right).
\end{eqnarray} 

\end{proof}
 
\subsection{Proof of Lemma \ref{lem:thm1}}\label{append:subsec:proof-lem2}
 As shown in \cite{du2020learnability}, the estimated gradients, which are caused by the gates noise and the sample errors, can be explicitly formulated to relate with its analytic gradients.  
\begin{lemma}[Modified from Theorem 3, \cite{du2020learnability}]\label{lem:noise_QNN_gaussian}
Denote $\tilde{p}=1-(1-p)^{L_Q}$ with $L_Q$ being the quantum circuit depth. At the $(t,w)$-th iteration, we  define five constants with
 \[
    C^{(t,w)}_{i,j,a}= 
\begin{cases}
    (1-\tilde{p})\tilde{p}(\frac{1}{2}-{y}_i^{(t,w)})(\hat{y}_{i}^{(t,w,+_j)}-\hat{y}_{i}^{(t,w,-_j)}) - (2\tilde{p}-\tilde{p}^2)\lambda\bm{\theta}_j^{(t, w)},&  a=1\\
    (1-\tilde{p})\left(\hat{y}_{i}^{(t,w,+_j)}-\hat{y}_{i}^{(t,w,-_j)} \right),  & a=2 \\
 (1-\tilde{p})\hat{y}_i^{(t,w)} + \frac{\tilde{p}}{2}- y_i^{(t,w)},   & a=3  \\
 \frac{-(1-\tilde{p})(\hat{y}_i^{(t,w)})^2 + (1-\tilde{p})^2\hat{y}_i^{(t,w)}+\frac{\tilde{p}}{2}-\frac{\tilde{p}^2}{4}}{K}, & a=4 \\
   \frac{-(1-\tilde{p})((\hat{y}_{i,+_j}^{(t,w)})^2+(\hat{y}_{i,-_j}^{(t,w)})^2) + (1-\tilde{p})^2(\hat{y}_{i,+_j}^{(t, w)}+\hat{y}_{i,-_j}^{(t, w)})+\tilde{p}-\frac{\tilde{p}^2}{2}}{K}, & a =5,
\end{cases}
\]
where $\hat{y}_{i}^{(t,w,\pm_j)}$, $\hat{y}_i^{(t,w)}$, and $y_i^{(t,w)}$ are defined in Eq.~(\ref{eqn:est-grad-QNN}), $K$ refers to the number of quantum  measurements.  

 The relation between the estimated and analytic gradients of QNN follows  
 \begin{equation}\label{eqn:analy-est-grad-QNN}
 \nabla_j\barL_i(\bm{\theta}^{(t,w)}) = (1-\tilde{p})^2\nabla_j{\mathcal{L}}_i(\bm{\theta}^{(t,w)}) + C^{(t,w)}_{i,j,1} +  \bm{\varsigma}_{i,j}^{(t,w)}, \forall j\in[d]	
 \end{equation}
 with $\bm{\varsigma}_{i,j}^{(t,w)}= C^{(t,w)}_{i,j,2}\xi_i^{(t,w)} + C^{(t,w)}_{i,j,3}\xi_{i,j}^{(t,w)}+\xi^{(t,w)}\xi_{i,j}^{(t,w)}$, where $\xi_{i}^{(t,w)}$ and $\xi_{i,j}^{(t,w)}$ are two random variables with zero mean and variances $C^{(t,w)}_{i,j,4}$ and $C^{(t,w)}_{i,j,5}$, respectively. 
\end{lemma}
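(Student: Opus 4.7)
The proof is entirely per-coordinate and reduces to (i) invoking the decomposition in Lemma \ref{lem:noise_QNN_gaussian}, (ii) exploiting that the two noise variables $\xi_i^{(t,w)}$ and $\xi_{i,j}^{(t,w)}$ are zero-mean and independent across the ``circuit output'' and ``parameter-shift'' measurement draws, and (iii) inserting uniform worst-case bounds for the five auxiliary constants $C^{(t,w)}_{i,j,a}$, using $y,\hat y\in[0,1]$, $\bm{\theta}_j\in[0,2\pi)$, and the $G_1$-Lipschitz control supplied by Lemma \ref{lem:Lsmooth}. The overall plan is: rearrange Eq.~(\ref{eqn:analy-est-grad-QNN}) into a difference; square and sum over $j\in[d_Q]$; evaluate $\mathbb{E}_{\bm{\varsigma}}$ term-by-term using independence; then bound each deterministic prefactor by its sup.

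First I would subtract $\nabla_j\mathcal{L}_i(\bm{\theta}_i^{(t,w)})$ from both sides of Eq.~(\ref{eqn:analy-est-grad-QNN}) and use the identity $(1-\tilde{p})^2-1=-\tilde{p}(2-\tilde{p})$ to obtain, for every $j\in[d_Q]$,
\begin{equation*}
\nabla_j\barL_i-\nabla_j\mathcal{L}_i \;=\; -\tilde{p}(2-\tilde{p})\,\nabla_j\mathcal{L}_i + C^{(t,w)}_{i,j,1} + \bm{\varsigma}^{(t,w)}_{i,j}.
\end{equation*}
Squaring, summing over $j$, and taking the expectation, the zero-mean property of $\bm{\varsigma}^{(t,w)}_{i,j}$ annihilates the cross term between the deterministic part and the noise, giving
\begin{equation*}
\mathbb{E}_{\bm{\varsigma}}\!\left[\|\nabla\barL_i-\nabla\mathcal{L}_i\|^2\right]
= \sum_{j=1}^{d_Q}\!\big(-\tilde{p}(2-\tilde{p})\nabla_j\mathcal{L}_i + C^{(t,w)}_{i,j,1}\big)^2
+ \sum_{j=1}^{d_Q} \mathbb{E}[(\bm{\varsigma}^{(t,w)}_{i,j})^2].
\end{equation*}
The deterministic square is split by Young's inequality. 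The leading piece yields the $\tilde{p}^2(2-\tilde{p})^2\|\nabla\mathcal{L}_i\|^2$ contribution directly. For the $C^{(t,w)}_{i,j,1}$ piece I would substitute its case-1 definition and bound each factor uniformly: $|\tfrac12-y|\le\tfrac12$ produces the $\tfrac14(1-\tilde{p})^2\tilde{p}^2 d_Q$ term after summing, while $|\hat y_i^{(t,w,+_j)}-\hat y_i^{(t,w,-_j)}|$ is controlled by $G_1$ via Lemma \ref{lem:Lsmooth}, yielding the $(2-\tilde{p})^2\tilde{p}^2 G_1 d_Q$ term; the $-(2\tilde p-\tilde p^2)\lambda\bm{\theta}_j$ sub-term is absorbed using $|\lambda\bm{\theta}_j|=O(\lambda)$.

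For the variance contribution I would expand
\begin{equation*}
\mathbb{E}[(\bm{\varsigma}^{(t,w)}_{i,j})^2] = \mathbb{E}\big[(C^{(t,w)}_{i,j,2}\xi_i^{(t,w)} + C^{(t,w)}_{i,j,3}\xi_{i,j}^{(t,w)} + \xi_i^{(t,w)}\xi_{i,j}^{(t,w)})^2\big]
\end{equation*}
and use the zero-mean and independence of the pair $(\xi_i^{(t,w)},\xi_{i,j}^{(t,w)})$ to eliminate every odd-order cross moment, leaving the clean form $(C^{(t,w)}_{i,j,2})^2 C^{(t,w)}_{i,j,4}+(C^{(t,w)}_{i,j,3})^2 C^{(t,w)}_{i,j,5}+C^{(t,w)}_{i,j,4}C^{(t,w)}_{i,j,5}$. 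Plugging the explicit case-4 and case-5 variances (each $O(1/K)$) together with $|C^{(t,w)}_{i,j,2}|\le (1-\tilde{p})G_1$ and $|C^{(t,w)}_{i,j,3}|\le 1-\tilde{p}/2$ (read directly off the case definitions under $y,\hat y\in[0,1]$) and summing over $j$ produces the $\tfrac{7(1-\tilde p/2)^2+\tfrac18}{K}\,d_Q$ term.

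The main obstacle is not the probabilistic argument, which is essentially a single application of zero-mean plus independence, but the arithmetic bookkeeping: obtaining \emph{precisely} the claimed coefficients $\tfrac14$, $G_1$, $7$, and $\tfrac18$ requires simultaneously tight worst-case evaluations of the five $C^{(t,w)}_{i,j,a}$'s, with multiple subcases depending on how the bound on $|\hat y_\pm|$ versus $|\hat y_\pm^2|$ is split. I would first carry out the analysis with loose constants to verify the structural form of the bound, then tighten each case against the target coefficients to close the proof.
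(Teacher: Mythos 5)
There is a genuine gap, and it is structural: your proposal does not prove the assigned statement at all. The statement to be proved \emph{is} Lemma~\ref{lem:noise_QNN_gaussian} --- the exact decomposition $\nabla_j\barL_i(\bm{\theta}^{(t,w)}) = (1-\tilde{p})^2\nabla_j\mathcal{L}_i(\bm{\theta}^{(t,w)}) + C^{(t,w)}_{i,j,1} + \bm{\varsigma}_{i,j}^{(t,w)}$ together with the identification of the five constants $C^{(t,w)}_{i,j,a}$ and the zero-mean/variance properties of $\xi_i^{(t,w)}$ and $\xi_{i,j}^{(t,w)}$. Your very first step ``(i) invoking the decomposition in Lemma~\ref{lem:noise_QNN_gaussian}'' assumes exactly this conclusion, so the argument is circular with respect to the target. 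What you then derive --- square the difference, kill the cross terms by zero-mean, bound the $C$'s via $y,\hat{y}\in[0,1]$ and the $G_1$-Lipschitz property, and arrive at the terms $\frac{(1-\tilde{p})^2\tilde{p}^2 d_Q}{4}$, $(2-\tilde{p})^2\tilde{p}^2 G_1 d_Q$, and $\frac{7(1-\tilde{p}/2)^2+\frac{1}{8}}{K}d_Q$ --- is the proof of the \emph{downstream} Lemma~\ref{lem:thm1} in Appendix~\ref{append:subsec:proof-lem2}, which the paper proves separately \emph{using} Lemma~\ref{lem:noise_QNN_gaussian} as input. You have reproduced that later argument (quite faithfully, in fact), but left the statement in question unproved.

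A correct proof of the lemma must be built directly from Eqs.~(\ref{eqn:dep_noise})--(\ref{eqn:est-grad-QNN}): compute the mean of the sample average $\bar{y}_i^{(t,w)}=\frac{1}{K}\sum_k V_k^{(t,w)}$ over the depolarized state, namely $\mathbb{E}[\bar{y}_i^{(t,w)}]=(1-\tilde{p})\hat{y}_i^{(t,w)}+\tilde{p}\Tr(O)/2^N$ (with $\Tr(O)/2^N=1/2$ giving the $\tilde{p}/2$ shifts appearing in $C^{(t,w)}_{i,j,3}$ and $C^{(t,w)}_{i,j,4}$); define the centered fluctuations $\xi_i^{(t,w)}=\bar{y}_i^{(t,w)}-\mathbb{E}[\bar{y}_i^{(t,w)}]$ and analogously for the parameter-shifted sample means, whose Bernoulli-sample-mean variances $q(1-q)/K$ yield exactly $C^{(t,w)}_{i,j,4}$ and $C^{(t,w)}_{i,j,5}$; then substitute into the estimated gradient $(\bar{y}_i^{(t,w)}-y_i^{(t)})\frac{\bar{y}_i^{(t,w,+_j)}-\bar{y}_i^{(t,w,-_j)}}{2}+\lambda\bm{\theta}_{i,j}^{(t,w)}$, expand the bilinear product, and match terms: the fully deterministic part reorganizes (via $(1-\tilde{p})^2-1=-(2\tilde{p}-\tilde{p}^2)$ applied to the regularizer $\lambda\bm{\theta}_j$) into $(1-\tilde{p})^2\nabla_j\mathcal{L}_i + C^{(t,w)}_{i,j,1}$, the terms linear in one fluctuation give $C^{(t,w)}_{i,j,2}\xi_i^{(t,w)}+C^{(t,w)}_{i,j,3}\xi_{i,j}^{(t,w)}$, and the bilinear remainder gives $\xi_i^{(t,w)}\xi_{i,j}^{(t,w)}$, with independence justified because the unshifted and shifted expectation values are estimated from separate measurement runs. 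None of this computation appears in your proposal; the bookkeeping difficulties you flag (the coefficients $\frac14$, $7$, $\frac18$) belong to Lemma~\ref{lem:thm1}, not to this lemma, whose content is the exact algebraic identity rather than any inequality. Note also that the paper itself does not reprove this statement --- it imports it as a modification of Theorem~3 of \cite{du2020learnability} --- so a self-contained derivation along the lines above is precisely what a blind proof needed to supply.
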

This results can be used to obtain Lemma \ref{lem:thm1}.

\begin{proof}[Proof of Lemma \ref{lem:thm1}]
	We first derive the term  	\begin{eqnarray}
		&& \mathbb{E}_{\bm{\varsigma}_{i,j}^{(t,w)}}\left[ \left\|    \nabla \barL_i\left(\bm{\theta}_i^{(t,w)}\right) - \nabla \mathcal{L}_i\left(\bm{\theta}_i^{(t,w)} \right) \right\|^2\right] \nonumber\\
		= && \sum_{j=1}^{d_Q} \mathbb{E}_{\bm{\varsigma}_{i,j}^{(t,w)}}\left[   \left( \nabla_j \barL_i\left(\bm{\theta}_i^{(t,w)}\right)  - \nabla_j \mathcal{L}_i\left(\bm{\theta}_i^{(t,w)}\right) \right)^2 \right] \nonumber\\
		= && \sum_{j=1}^{d_Q} \mathbb{E}_{\bm{\varsigma}_{i,j}^{(t,w)}}\left[   \left( (\tilde{p}-2)\tilde{p}\nabla_j \mathcal{L}_i\left(\bm{\theta}_i^{(t,w)}\right) + C^{(t,w)}_{i,j,1} +  \bm{\varsigma}_{i,j}^{(t,w)} \right)^2  \right] \nonumber\\
		= && \sum_{j=1}^{d_Q} \left( (\tilde{p}-2)\tilde{p}\nabla_j \mathcal{L}_i\left(\bm{\theta}_i^{(t,w)}\right)\right)^2 + \left(C^{(t,w)}_{i,j,1}\right)^2 +  \mathbb{E}_{\bm{x}_i \sim \bm{\varsigma}_{i,j}^{(t,w)}}\left[ \left(\bm{\varsigma}_{i,j}^{(t,w)} \right)^2 \right] + 2(\tilde{p}-2)\tilde{p}\nabla_j \mathcal{L}_i\left(\bm{\theta}_i^{(t,w)}\right)C^{(t,w)}_{i,j,1} \nonumber\\
		 && ~~~~+  2\left((\tilde{p}-2)\tilde{p}\nabla_j \mathcal{L}_i\left(\bm{\theta}_i^{(t,w)}\right)+C^{(t,w)}_{i,j,1}\right)\mathbb{E}_{\bm{x}_i \sim \bm{\varsigma}_{i,j}^{(t,w)}}\left[ \bm{\varsigma}_{i,j}^{(t,w)}   \right] \nonumber\\
		 \leq && (\tilde{p}-2)^2\tilde{p}^2 \left\|\nabla \mathcal{L}_i\left(\bm{\theta}_i^{(t,w)}\right) \right\|^2 + \frac{(1-\tilde{p})^2\tilde{p}^2d_Q}{4} +  \mathbb{E}_{\bm{x}_i \sim \bm{\varsigma}_{i,j}^{(t,w)}}\left[ \left(\bm{\varsigma}_{i,j}^{(t,w)} \right)^2 \right]   + (2-\tilde{p})(1-\tilde{p})\tilde{p}^2 G_1d_Q\nonumber\\
		 = && (\tilde{p}-2)^2\tilde{p}^2 \left\|\nabla \mathcal{L}_i\left(\bm{\theta}_i^{(t,w)}\right) \right\|^2 + \frac{(1-\tilde{p})^2\tilde{p}^2d_Q}{4} + \sum_{j=1}^{d_Q}  \left(C^{(t,w)}_{i,j,2}\right)^2C^{(t,w)}_{i,j,4} + \left(C^{(t,w)}_{i,j,3}\right)^2C^{(t,w)}_{i,j,5}+ C^{(t,w)}_{i,j,4}C^{(t,w)}_{i,j,5} \nonumber\\
		 && +  (2-\tilde{p})(1-\tilde{p})\tilde{p}^2 G_1d_Q \nonumber\\
		 \leq && (\tilde{p}-2)^2\tilde{p}^2 \left\|\nabla \mathcal{L}_i\left(\bm{\theta}_i^{(t,w)}\right) \right\|^2 + \frac{(1-\tilde{p})^2\tilde{p}^2d_Q}{4} +  (1-\tilde{p})^2 d_Q \frac{(1-\tilde{p})^2 + 1/4}{K} + \left(1 - \frac{\tilde{p}}{2}\right)^2d_Q\frac{2(1-\tilde{p})^2+1/2}{K} \nonumber\\
		 && + \frac{(2(1-\tilde{p})^2+1/2)((1-\tilde{p})^2 + 1/4)d_Q}{K^2} + (2-\tilde{p})(1-\tilde{p})\tilde{p}^2 G_1d_Q \nonumber\\
		 \leq && (\tilde{p}-2)^2\tilde{p}^2 \left\|\nabla \mathcal{L}_i\left(\bm{\theta}_i^{(t,w)}\right) \right\|^2 + \frac{(1-\tilde{p})^2\tilde{p}^2d_Q}{4} + (2-\tilde{p})^2\tilde{p}^2 G_1d_Q + \frac{7(1 - \frac{\tilde{p}}{2})^2 + \frac{1}{8}}{K}d_Q ~,
	\end{eqnarray}
	where the first equality comes from the definition of the $l_2$ norm, the second equality uses Eq.~(\ref{eqn:analy-est-grad-QNN}), the first inequalitu employs  $\mathbb{E}_{\bm{\varsigma}_{i,j}^{(t,w)}}[ \bm{\varsigma}_{i,j}^{(t,w)}]=0$ and the Lipschitz property of $\mathcal{L}_i$ in Lemma \ref{lem:Lsmooth}, the second inequality utilizes the upper bounds of $C^{(t,w)}_{i,j,a}$ in Lemma \ref{lem:noise_QNN_gaussian}, and the last inequality simplifies the factor $\tilde{p}$.   
	
\end{proof}

\section{Numerical simulation details of QUDIO towards hand-written digits image classification}\label{app:qnn}
This section presents the detailed description about applying QUDIO to accomplish hand-written digits image classification tasks. First, in  Appendix \ref{append:subsec:QNN-data}, the construction of local nodes is explained. Then, the hyper-parameter configuration of QUDIO is introduced in Appendix \ref{append:subsec:QNN-hyper}. Last, in Appendix \ref{append:subsec:sim-res-QNN}, we provide complementary simulation results of QUDIO together with thorough discussions.

\subsection{The setup of local nodes}\label{append:subsec:QNN-data}
\textbf{The distilled dataset.} We  sample $756$ digit images labeled with `0' and `1' from the MNIST handwritten digit database \cite{lecun1998mnist}, where $256$ images (examples) compose the  training set and the rest $500$ images (examples) form the test dataset. Once the training and test sets are collected, the data preprocessing is applied, i.e., all examples are down-sampled to $8\times 8$ pixels followed by the vectorization and $l_2$ normalization. 

Given the processed training set $\mathcal{D}=\{\bm{x}_i, y_i\}_{i=1}^{n}$ with $n=256$ and $\bm{x}_i\in\mathbb{R}^{64}$, the central server partitions it into $\{\mathcal{D}_i\}_{i=1}^Q$ and allocate them to $Q$ local nodes. For example, when $Q=32$, we have $|\mathcal{D}_i|=8$ for $\forall i\in[32]$.  

\textbf{The implementation of local nodes.} Due to the identical implementation of all local nodes, here we mainly consider the setup of the $i$-th local node $\mathcal{Q}_i$. The construction of PQCs of $\mathcal{Q}_i$, i.e., $h(\bm{\theta},O,\rho_i) = \Tr(OU(\bm{\theta})\rho_iU(\bm{\theta})^{\dagger}),~\forall i\in[n]$ in Eq.~(\ref{eqn:QNN-ansatz}),  is illustrated in Fig.~\ref{fig:ansatz}. Particularly, the amplitude encoding method is adopted to encode $\bm{x}_i\in\mathcal{D}_i$ to the quantum state $\rho_i$, i.e.,   
\begin{equation}
 \rho_i=   \ket{\bm{x}_i}\bra{\bm{x}_i}, ~\text{and}~  \ket{\bm{x}_i} = \sum_{j=1}^{64} \bm{x}_{i,j}\ket{j} . 
\end{equation}
The prepared state $ \rho_i$ is interacted with $U(\bm{\theta})$ implemented by the hardware-efficient ansatz. Note that we set the block number in  $U(\bm{\theta})$ as $L=4$ and each block contains a single-qubit layer, i.e., $\otimes_{i=1}^{N} Rot(\bm{\theta}_i)=\otimes_{i=1}^{N} \RZ (\bm{\theta}_{i,1})\RY (\bm{\theta}_{i,2})\RZ (\bm{\theta}_{i,3})$ and an entangled layer formed by CNOT gates, as highlighted in the dashed box of Fig.~\ref{fig:ansatz}. The quantum measurement operator $O$ is set as $O=\mathbb{I}_{2^5}\otimes \ket{0}\bra{0}$. 

The predicted label of $\bm{x}_i$ is assigned as `$0$'  if $h(\bm{\theta},O,\rho_i)\leq 0.5$; otherwise, $\bm{x}_i$ is classified as `$1$'. In the NISQ setting,  $\bm{x}_i$ is predicted as `$0$' if the sample mean follows $\bar{y}_i\leq 0.5$; otherwise, $\bm{x}_i$ is classified as `$1$'.   

\begin{figure}[htp]
\centering
\includegraphics[width=0.75\textwidth]{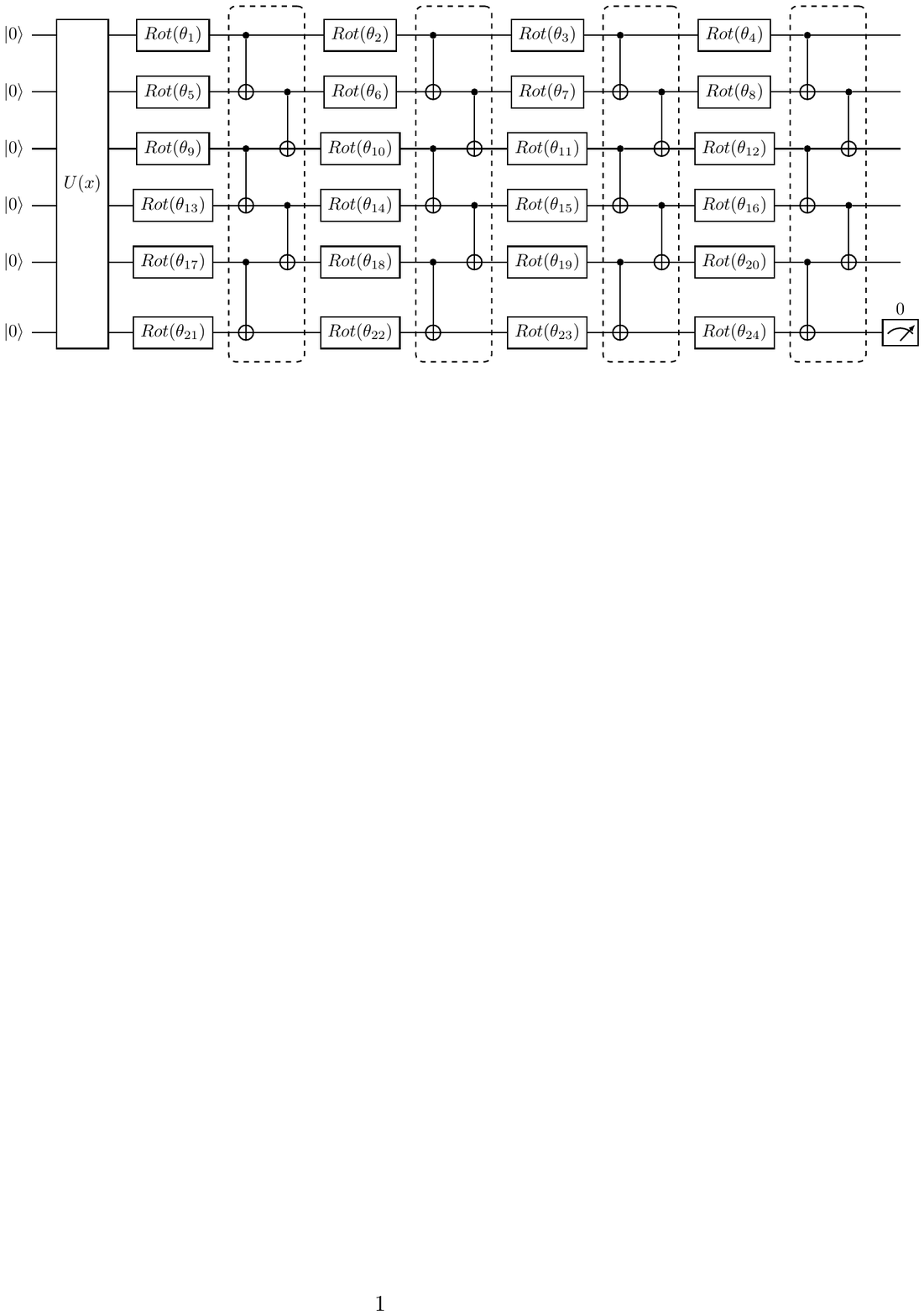}	
\caption{\small{\textbf{The implementation of the local node $\mathcal{Q}_i$ in the binary classification task.} The  amplitude embedding method is exploited to convert the classical example $\bm{x}_i$ into the quantum state $\ket{\bm{x}_i}$. The hardware-efficient ansatz is adopted to implement the trainable unitary $U(\bm{\theta})$. The quantum observable is set as $O=\mathbb{I}_{2^5}\otimes \ket{0}\bra{0}$.}}
\label{fig:ansatz}
\end{figure}

\subsection{Hyper-parameter settings}\label{append:subsec:QNN-hyper}

\textbf{The source code of QUDIO}. The realization of QUDIO is based on Pytorch \cite{NEURIPS2019_9015} and its distributed communication package. To be more concrete, we select the GLOO backend and ring all-reduce operations to achieve the communication protocol between processes on CPU. Note that this distributed optimization framework can be easily extended to coordinate  multiple quantum processors.

\textbf{The classical optimizers.} In QUDIO, the stochastic gradient descent optimizer is adopted to update parameters. Its hyper-parameters settings are as follows. The initial learning rate is set as $0.01$, the momentum factor is set as $0.9$, and the decay-rate is set as $0.1$ every $40$ epochs. For the ideal case, the gradient of each learnable parameter is calculated by back propagation \cite{rumelhart1986learning}. For the NISQ case, the depolarizing noise and measurement error are introduced and the gradient is estimated by parameter shift rule. 

\textbf{Hardware parameters.}  All  simulation results in this study are completed by the classical device with Intel(R) Xeon(R) Gold 6267C CPU @ 2.60GHz and 128 GB memory.
 
\begin{figure*} 
\captionsetup[subfigure]{justification=centering}
\centering
\subfloat[]{
\centering
\includegraphics[width=0.38\textwidth]{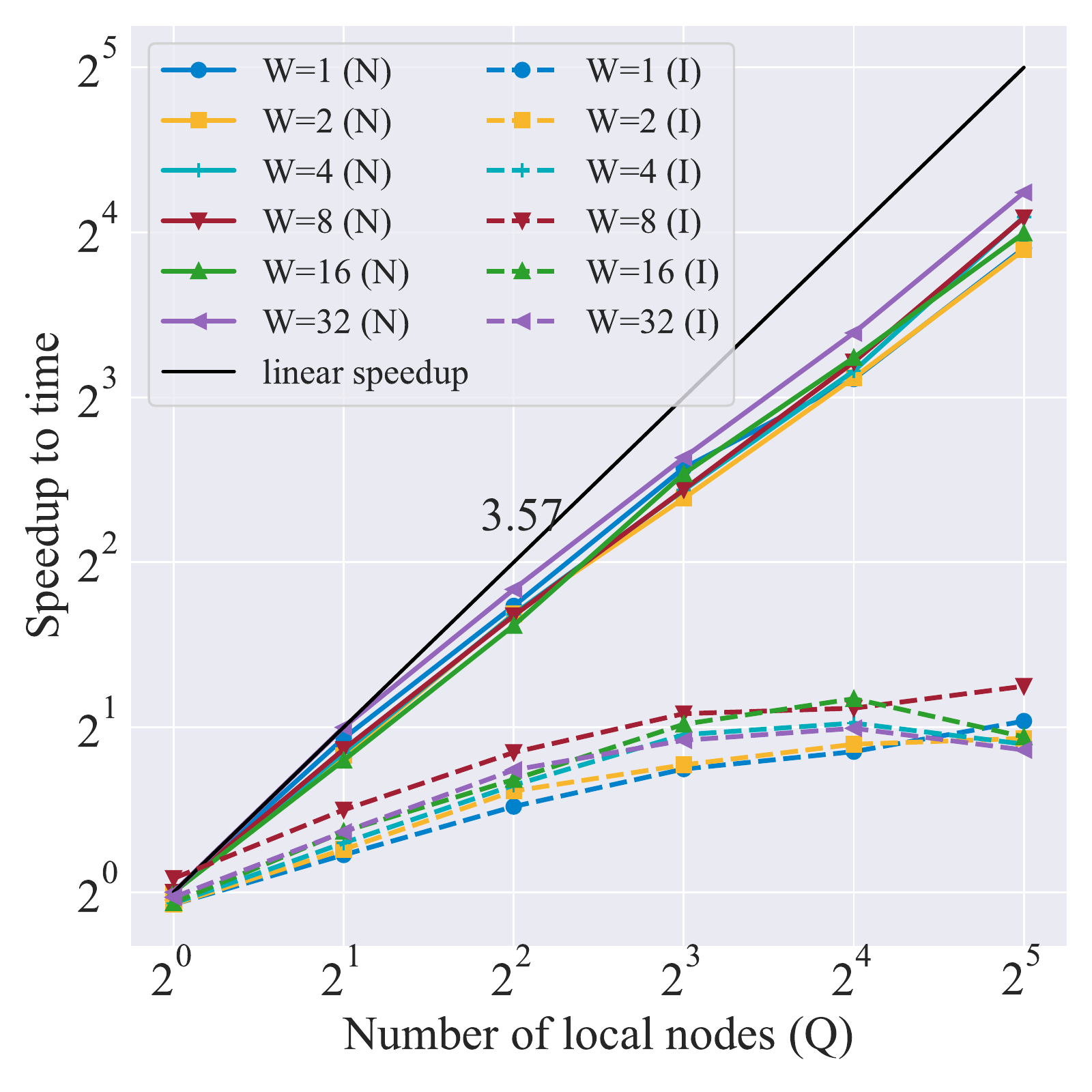}}
\subfloat[]{
\centering
\includegraphics[width=0.38\textwidth]{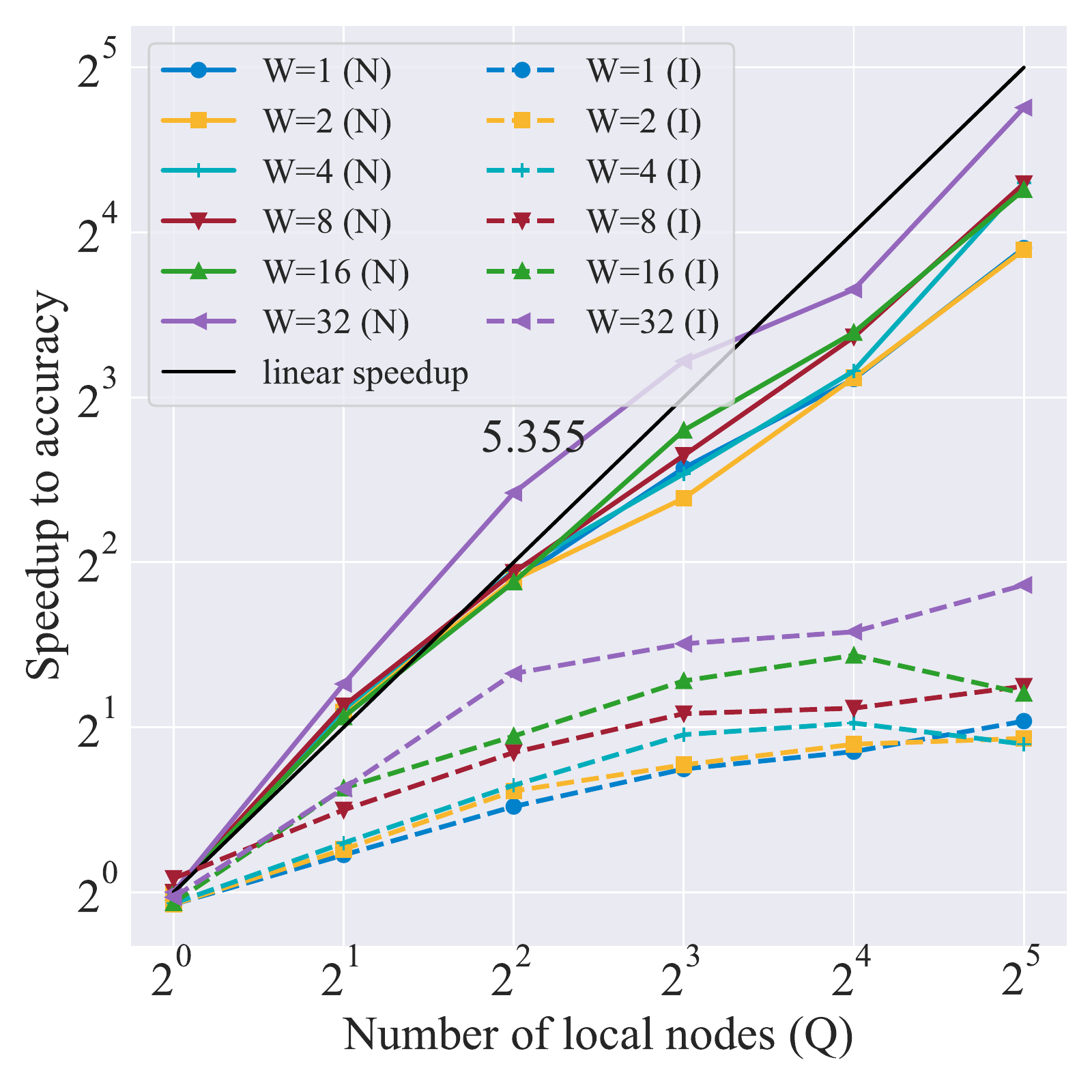}}
\caption{\small{\textbf{QUDIO accelerates the training of QNNs.} Each point refers to the averaged result with five repetitions. The labels `N' and `I' represent the noisy and ideal scenarios, respectively. (a) The speedup ratio of QUDIO in the measure of a fixed number of training iterations $T$. (b) The speedup ratio of QUDIO in the measure of the time to accuracy.}}
\label{fig:app:qnn_speedup}
\end{figure*}

\subsection{More simulation results}\label{append:subsec:sim-res-QNN}
Here we demonstrate more simulation results about QUDIO that are omitted in the main text. Specifically, we first evaluate the speedup ratio of QUDIO in a fine-grained setting. Then, we investigate how the number of local nodes $Q$ effects the learning performance. Finally, we study how the system noise and the number of measurements influence the performance of  QUDIO. These supplementary results facilitate us to better understand the capability of QUDIO from different views.      

\textbf{Speedup analysis.} The implementation detail of QUDIO in the exploration of the speedup ratio is as follows. Both the number of local nodes and the number of local iterations have six varied settings, i.e., $Q, W\in\{1, 2, 4, 8, 16, 32\}$. In the NISQ scenario, the depolarization rate and the number of measurement are set as $p=0.0001$ and $K=100$, respectively.

Fig.~\ref{fig:app:qnn_speedup}(a) and Fig.~\ref{fig:app:qnn_speedup}(b) respectively depict the speedup ratio of QUDIO in the measure of a fixed number of training iterations $T$ and the time to accuracy, and Tab.~\ref{tab:qnn:run_time_q} and Tab.~\ref{tab:qnn:run_time_w} record the concrete running time under each setting. Note that the first metric differs with the second one in the sense that the latter requires QUDIO to surpass a threshold accuracy instead of a fixed $T$. For both two metrics, QUDIO attains a sublinear speedup in terms of $Q$. Besides, a larger number of local iterations $W$ generally promises a higher speedup ratio. We also notice that there exists a manifest margin between the noiseless and NISQ settings. This phenomenon is mainly caused by the opposite role of the communication overhead, i.e., the communication overhead occupies a large portion of the  computational runtime in the noiseless case, while it becomes negligible in the NISQ case.

\begin{table}[htp]
    \captionsetup[subtable]{justification=centering}
    \centering
    \subfloat[Time to iteration]{
    \begin{tabular}{ccc}
    \hline
    Number of local nodes ($Q$) & NISQ & Running time (s) \\
    \hline
        1 & \XSolidBrush & 1742.21 \\
        1 & \checkmark & 41974.36 \\
        2 & \XSolidBrush & 1324.12 \\
        2 & \checkmark & 20992.67 \\
        4 & \XSolidBrush & 1020.48 \\
        4 & \checkmark & 11756.49 \\
        8 & \XSolidBrush & 900.62 \\
        8 & \checkmark & 6761.86 \\
        16 & \XSolidBrush & 856.47 \\
        16 & \checkmark & 4002.11 \\
        32 & \XSolidBrush & 938.07 \\
        32 & \checkmark & 2218.29 \\
    \hline
    \end{tabular}}
    \quad
    \subfloat[Time to accuracy]{
    \begin{tabular}{ccc}
    \hline 
    Number of local nodes ($Q$) & NISQ & Running time (s) \\
    \hline
        1 & \XSolidBrush & 104.53 \\
        1 & \checkmark & 2518.46 \\
        2 & \XSolidBrush & 66.20 \\
        2 & \checkmark & 1049.63 \\
        4 & \XSolidBrush & 40.81 \\
        4 & \checkmark & 470.25 \\
        8 & \XSolidBrush & 36.02 \\
        8 & \checkmark & 270.47 \\
        16 & \XSolidBrush & 34.25 \\
        16 & \checkmark & 200.10 \\
        32 & \XSolidBrush & 28.14 \\
        32 & \checkmark & 93.16 \\
    \hline
    \end{tabular}}
    \caption{\small{\textbf{The runtime of QUDIO with respect to the varied number of local nodes $Q$ and fixed $W=32$ local steps}}.}
    \label{tab:qnn:run_time_q}
\end{table}
  
\begin{table}[htp]
    \captionsetup[subtable]{justification=centering}
    \centering
    \subfloat[Time to iteration]{
    \begin{tabular}{ccc}
    \hline
    Number of local steps ($W$) & NISQ & Running time (s) \\
    \hline
        1 & \XSolidBrush & 906.21 \\
        1 & \checkmark & 2833.21 \\
        2 & \XSolidBrush & 930.84 \\
        2 & \checkmark & 2589.63 \\
        4 & \XSolidBrush & 870.69 \\
        4 & \checkmark & 2253.47 \\
        8 & \XSolidBrush & 761.29 \\
        8 & \checkmark & 2215.43 \\
        16 & \XSolidBrush & 872.54 \\
        16 & \checkmark & 2354.78 \\
        32 & \XSolidBrush & 938.07 \\
        32 & \checkmark & 2218.29 \\
    \hline
    \end{tabular}}
    \quad
    \subfloat[Time to accuracy]{
    \begin{tabular}{ccc}
    \hline
    Number of local steps ($W$) & NISQ & Running time (s) \\
    \hline
        1 & \XSolidBrush & 54.37 \\
        1 & \checkmark & 169.99 \\
        2 & \XSolidBrush & 55.85 \\
        2 & \checkmark & 155.37 \\
        4 & \XSolidBrush & 52.24 \\
        4 & \checkmark & 117.18 \\
        8 & \XSolidBrush & 45.67 \\
        8 & \checkmark & 115.20 \\
        16 & \XSolidBrush & 43.62 \\
        16 & \checkmark & 117.73 \\
        32 & \XSolidBrush & 28.14 \\
        32 & \checkmark & 93.16 \\
    \hline
    \end{tabular}}
    \caption{\small{\textbf{The runtime of QUDIO with respect to the varied number of local steps $W$ and fixed $Q=32$ local nodes.}}}
    \label{tab:qnn:run_time_w}
\end{table}

\textbf{Accuracy analysis.} We next turn to explore how the factors $Q$ and $W$ influence the final test accuracy. Analogous to the speedup analysis, the number of local nodes and local iterations have six settings, i.e., $Q, W\in\{1, 2, 4, 8, 16, 32\}$. Meanwhile, the depolarizing rate and the number of measurement are set as  $p=0.0001$ and $K=100$ respectively. 

As shown in Fig.~\ref{fig:app:qnn_acc}, for all settings, the test accuracy achieved by QUDIO is above $95\%$. This observation reflects the robustness of QUDIO. Moreover, when the number of local iterations $W$ is kept to be identical, QUDIO gains the highest test accuracy with the setting $Q=16$. On the contrary, when the number of local nodes $Q$ is kept to be identical, increasing $W$ subsumes to a deteriorate test accuracy, which complies with the claim of Theorem \ref{thm:conv-qnn}.
 
\begin{figure*}[htp]
\centering
\includegraphics[width=0.82\textwidth]{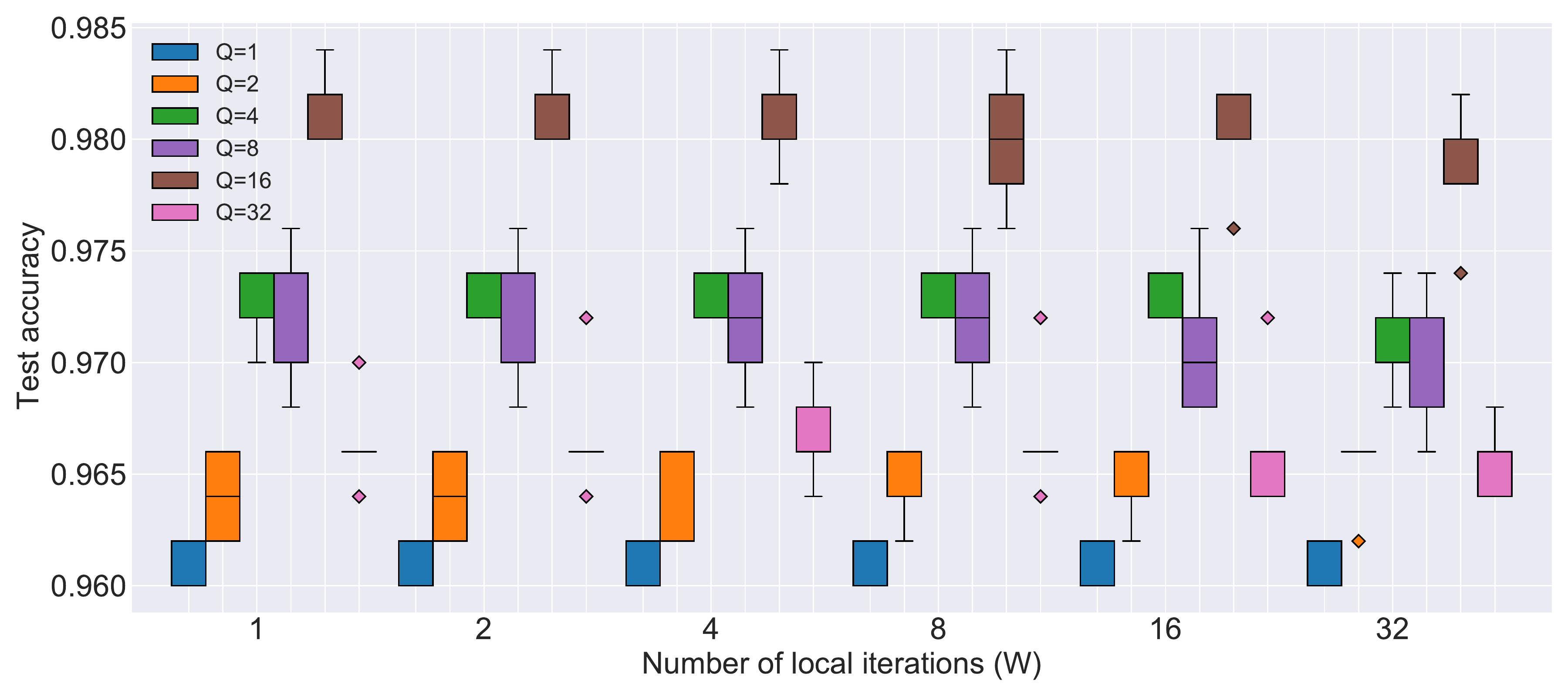}
\caption{\small{\textbf{Role of number of quantum processors and local iterations.} The box plot illustrate the final test accuracy corresponding to varied settings of $Q$ and $W$. }}
\label{fig:app:qnn_acc}
\end{figure*}

\textbf{The role of the system noise and the number of measurements.} We end this section by comprehending how the factors $p$ and $K$ influence the performance of QUDIO. In particular, the depolarization noise rate $p$ scales from $0.0001$ to $0.0512$, and the number of measurements  ranges from $5$ to $100$. The number of local nodes and local iterations is fixed to be $Q=16$ and $W=2$, respectively. 

Fig.~\ref{fig:app:qnn_noise} summarizes the simulation results. In particular, when  $p<0.0064$, the performance of QUDIO heavily depends on the number of measurements. For example, the test accuracy is around $98\%$ with $M=100$, while it drops to $87\%$  with $M=5$. When $p>0.0064$, both $p$ and $K$ determine the performance of QUDIO. For example, for the setting $K=5$, the test accuracy of QUDIO with $p=0.0512$ is reduced by $14\%$ than the setting $p=0.0256$ (i.e., from $77\%$ to $66\%$).

\begin{figure*}[htp]
\centering
\includegraphics[width=1.0\textwidth]{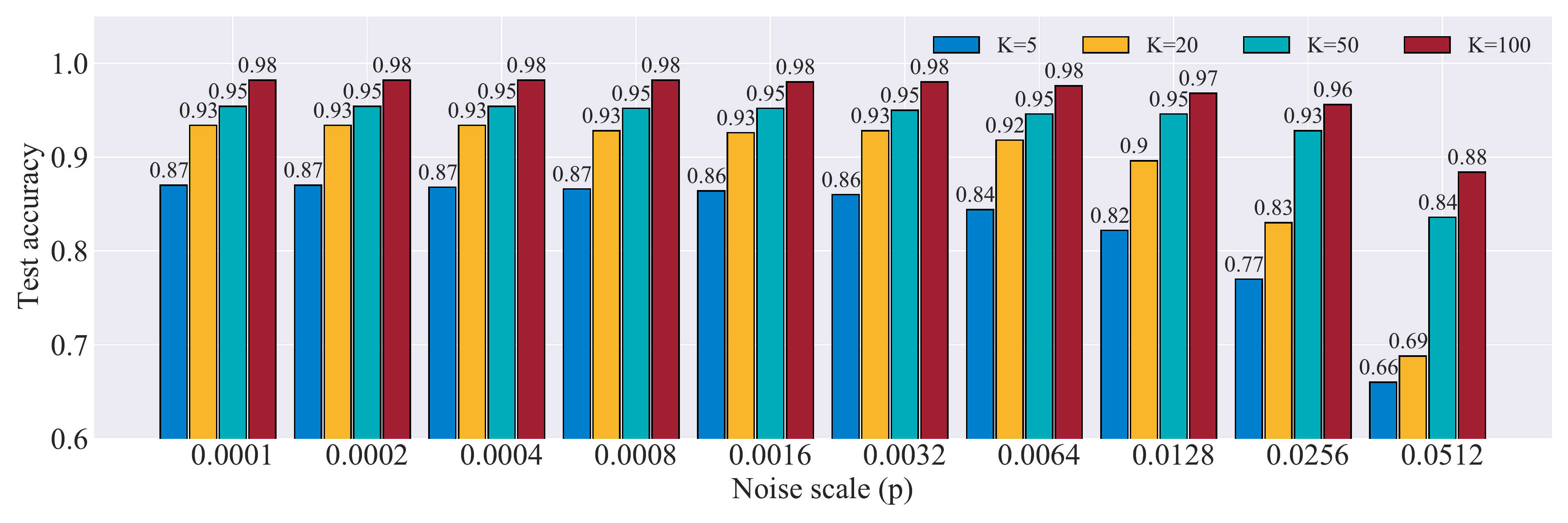}
\caption{\small{\textbf{The test accuracy of QUDIO with varied depolarization rate and the number of measurements.} The label `$K=a$' refers that the number of measurements $K$ is set as $a$. }}
\label{fig:app:qnn_noise}
\end{figure*}

\section{Numerical simulation details of QUDIO towards the ground state energy estimation of hydrogen molecule}\label{app:vqe}
This section provides elaboration about applying QUDIO to estimate the ground state energy of hydrogen molecule tasks. First, the setup of local nodes and the hyper-parameters settings are shown in Appendix \ref{append:VQE-local-setup}. Then, we provide complementary simulation results of QUDIO together with thorough discussions in Appendix \ref{append:VQE-sim}.

\subsection{Implementation of local nodes and hyper-parameters setting}\label{append:VQE-local-setup}
The implementation of QUDIO mainly follows the proposal \cite{kandala2017hardware}. Namely, the binary tree encoding method \cite{bravyi2002fermionic} is used to map the hydrogen molecular Hamiltonian into a 4-qubit system, where $H\in\mathbb{C}^{2^4\times 2^4}$ consists of $n=15$ local Hamiltonian terms. In QUDIO, the central server partitions these $15$ local terms into $Q$ subgroups $\{\mathcal{S}_i\}$ and allocate them to $Q$ local nodes.    

The realization of all local nodes follows the same routine. With this regard, here we only discuss the realization of the node $\mathcal{Q}_i$. The input quantum state is modified to $\rho_0=\ket{1100}\bra{1100}$. The implementation of the ansatz $U(\bm{\theta})$ is shown in Fig.~\ref{fig:app:vqe_ansatz}(a), which is formed by $4$ trainable single-qubit gates followed by $3$ CNOT gates. The prepared state is continuously operated with the observable $H_{S_i}$ to proceed optimization.

\begin{figure*}[htp]
\captionsetup[subfigure]{justification=centering}
\centering
\includegraphics[width=1.0\textwidth]{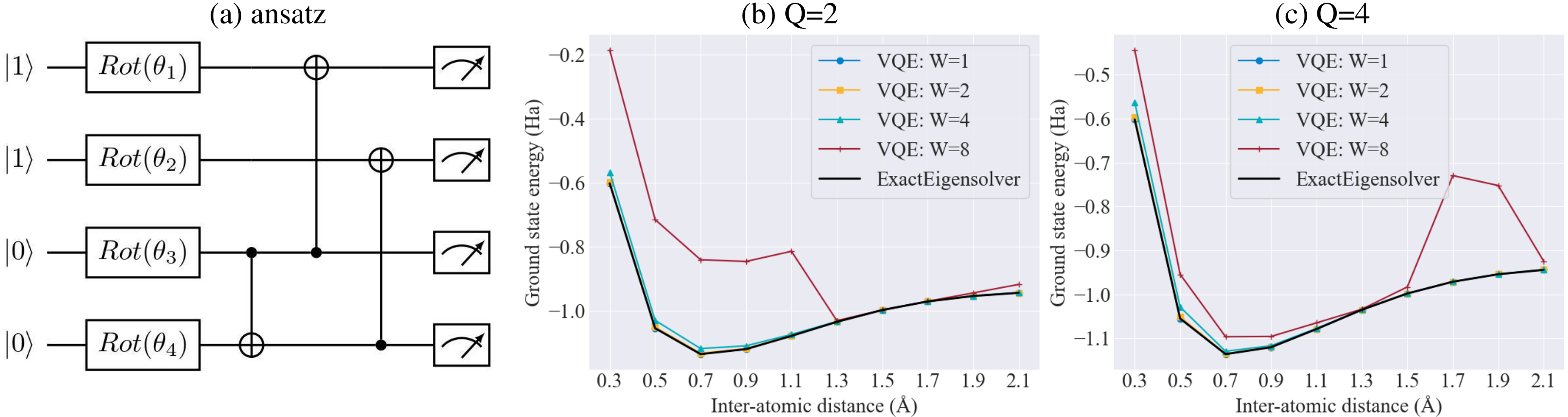}
\caption{\small{\textbf{The implementation of local nodes in QUDIO and simulation results. (a) The circuit implementation in the local node. (b)-(c) The estimated potential energy surface achieved by QUDIO when $Q=2$ and $Q=4$.} }}
\label{fig:app:vqe_ansatz}
\end{figure*}

\subsection{More simulation results}\label{append:VQE-sim}
We conduct extensive numerical simulations to benchmark how the number of local nodes $Q$ and the number of local iterations $W$ effect the performance of QUDIO. Besides, we explore the trainability of QUDIO.  Note that for all settings, we fix $K=100$. 

\textbf{The role of $Q$.} Fig.~\ref{fig:app:vqe_ansatz}(b) and (c) show the potential energy surface estimated by QUDIO with the different number of local nodes $Q$. Concisely, for both $Q=2$ and $Q=4$, increasing the number of local steps $W$ incurs an enhanced estimation error. Moreover, when QUDIO synchronizes trainable parameters at every local iteration (e.g., $W=1$), the approximation error approaches to be zero. These outcomes accord with the simulation results obtained in the main text and the claim of Theorem \ref{thm:conv-qnn}. 

\begin{figure*}[htp]
\centering
\includegraphics[width=0.8\textwidth]{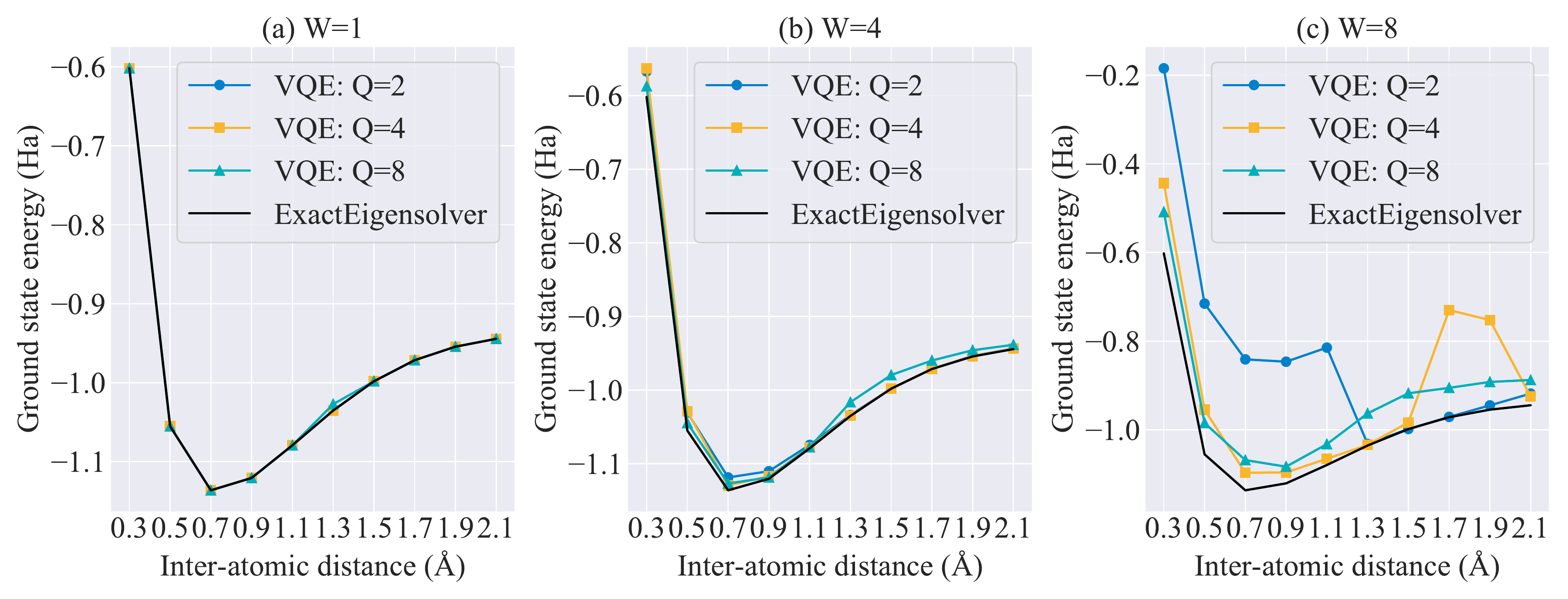}
\caption{\small{\textbf{Performance of QUDIO with different number of local updates $W$.} }}
\label{fig:app:vqe_w}
\end{figure*}

\textbf{The role of $W$.} Fig.~\ref{fig:app:vqe_w} compares the potential energy surface estimated by QUDIO with the different number of local iterations $W$ while the factor of local nodes is set as $Q=\{2, 4, 8\}$. Specifically, when $W=1$, QUDIO achieves the zero approximation error regardless of the number of local nodes $Q$. By contrast, when $W=8$,  the performance of QDUIO becomes inferior. In conjunction with the results in Fig.~\ref{fig:app:vqe_ansatz}, the factor $W$ determines the performance of QUDIO in the NISQ setting, as shown in Theorem \ref{thm:conv-qnn}.

\begin{figure*}[htp]
\centering
\includegraphics[width=1.0\textwidth]{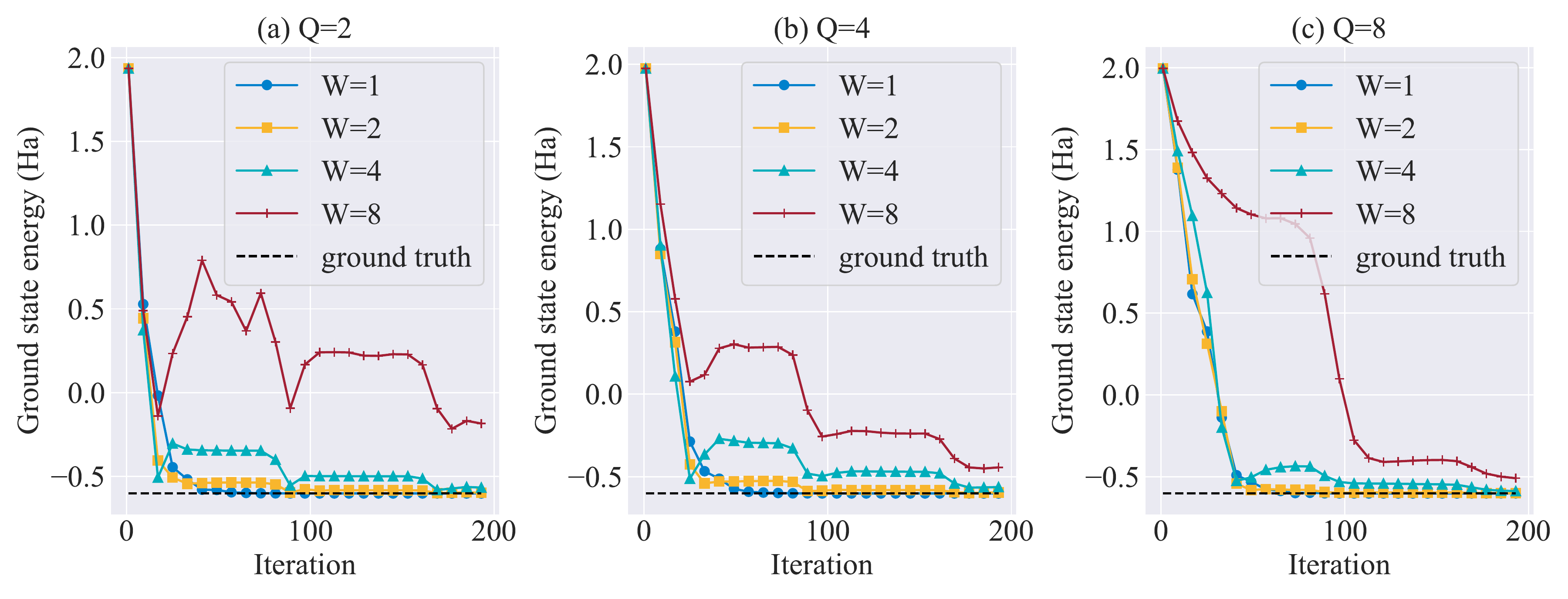}
\caption{\small{\textbf{The estimated energy achieved by  QUDIO during the training process.} The inter-atomic distance is fixed to be  $0.3\mathrm{\AA}$. }}
\label{fig:app:vqe_loss}
\end{figure*}
 
\textbf{The trainability of QUDIO.}   Fig.~\ref{fig:app:vqe_loss} indicates the estimated ground state energy of QUDIO with respect to the number of global iterations. The three subplots hint that the number of local iterations $W$ determines the trainability of QUDIO. Concretely, a smaller number of local updates $W$ assures a faster convergence.   

\end{document}